\newcommand{\ignore}[1]{{}}
\def\p1{\phantom{1}}
\newtheorem{theorem}{Theorem}
\newtheorem{lemma}{Lemma}
\newenvironment{proof}[1][Proof]{
                  \begin{trivlist}
                  \item \upshape \bfseries #1.
                  \upshape\mdseries}
		 {\nopagebreak\hspace*{\fill}{$\Box$}\end{trivlist}}
\newcommand{\full}[1]{{}}
\begin{document}

\begin{titlepage}
\title{On the Value of Job Migration in Online Makespan Minimization}
\author{Susanne Albers\thanks{Department of Computer Science, Humboldt-Universit\"at zu Berlin, Unter den Linden 6, 10099 Berlin. {\tt albers@informatik.hu-berlin.de}}\and Matthias Hellwig\thanks{Department of Computer Science, Humboldt-Universit\"at zu Berlin, Unter den Linden 6, 10099 Berlin. {\tt mhellwig@informatik.hu-berlin.de} }}
\date{}
\maketitle

\thispagestyle{empty}

\maketitle

\begin{abstract}
Makespan minimization on identical parallel machines is a classical scheduling problem. We consider
the online scenario where a sequence of $n$ jobs has to be scheduled non-preemptively on $m$ machines
so as to minimize the maximum completion time of any job. The best competitive ratio that can be
achieved by deterministic online algorithms is in the range $[1.88,1.9201]$. Currently no randomized
online algorithm with a smaller competitiveness is known, for general $m$.

In this paper we explore the power of job migration, i.e.\ an online scheduler is allowed to perform
a limited number of job reassignments. Migration is a common technique used in theory and practice to
balance load in parallel processing environments. As our main result we settle the performance that
can be achieved by deterministic online algorithms. We develop an algorithm that
is $\alpha_m$-competitive, for any $m\geq 2$, where $\alpha_m$ is the solution of a certain equation. 
For $m=2$, $\alpha_2 = 4/3$ and  $\lim_{m\rightarrow \infty} \alpha_m  = W_{-1}(-1/e^2)/(1+ W_{-1}(-1/e^2))
\approx 1.4659$. Here $W_{-1}$ is the lower branch of the Lambert $W$ function. For $m\geq 11$,
the algorithm uses at most $7m$ migration operations. For smaller $m$, $8m$ to $10m$ operations may
be performed. We complement this result by a matching lower bound: No online algorithm that uses $o(n)$
job migrations can achieve a competitive ratio smaller than $\alpha_m$. We finally trade performance
for migrations. We give a family of algorithms that is $c$-competitive, for any $5/3\leq c \leq 2$.
For $c= 5/3$, the strategy uses at most $4m$ job migrations. For $c=1.75$, at most $2.5m$ migrations are used.
\end{abstract}

\end{titlepage}

\section{Introduction}
Makespan minimization on identical machines is a fundamental scheduling problem that has received considerable
research interest over the last forty years. Let $\sigma = J_1, \ldots, J_n$ be a sequence of jobs that has
to be scheduled non-preemptively on $m$ identical parallel machines. Each job $J_i$ is specified by a processing 
time $p_i$, $1\leq i \leq n$. The goal is to minimize the makespan, i.e.\ the maximum completion time of any
job in a schedule. In the offline setting all jobs are known in advance. In the online setting the jobs arrive
one by one. Each job $J_i$ has to be scheduled immediately on one of the machines without knowledge of any future
jobs $J_k$, $k>i$. An online algorithm $A$ is called $c$-competitive if, for any job sequence, $A$'s makespan
is at most $c$ times the optimum makespan for that sequence~\cite{ST}.   

Early work on makespan minimization studied the offline setting. Already in 1966, Graham~\cite{G} presented the
{\em List\/} scheduling algorithm that schedules each job on a least loaded machine. {\em List\/} can be used
as an offline and online strategy and achieves a performance ratio of $2-1/m$. Hochbaum and Shmoys devised a
famous polynomial time approximation scheme~\cite{HS}. More recent research, published mostly in the 1990s,
investigated the online setting. The best competitive factor that can be attained by deterministic online
algorithms is in the range $[1.88,1.9201]$. Due to this relatively high factor, compared to {\em List\/}'s ratio 
of $2-1/m$, it is interesting to consider scenarios where an online scheduler has more flexibility
to serve the job sequence. 

In this paper we investigate the impact of job migration. At any time an online algorithm may perform
{\em reassignments\/}, i.e.\ a job already scheduled on a machine may be removed and transferred 
to another machine. Process migration is a well-known and widely used technique to balance load in parallel
and distributed systems. It leads to improved processor utilization and reduced processing delays. Migration
policies have been analyzed extensively in theory and practice.

It is natural to investigate makespan minimization with job migration. In this paper we present a comprehensive
study and develop tight upper and lower bounds on the competitive ratio that can be achieved by deterministic
online algorithms. It shows that even with a very limited number of migration operations, significantly
improved performance guarantees are obtained. 

\vspace*{0.1cm}

{\bf Previous work:}
We review the most important results relevant to our work. As mentioned above, {\em List\/} is 
$(2-1/m)$-competitive. Deterministic online algorithms with a smaller competitive ratio were
presented in~\cite{A,BFKV,FW,GW,KPT}. The best algorithm currently known is 1.9201-competitive~\cite{FW}.
Lower bounds on the performance of deterministic strategies were given in~\cite{A,BKR,FKT,GRTW,R,RC}.
The best bound currently known is 1.88, for general $m$. Randomized online algorithms cannot achieve
a competitive ratio smaller than $e/(e-1)\approx 1.58$~\cite{CVW,S}. No randomized
algorithm whose competitive ratio is provably below the deterministic lower bound is currently known,
for general $m$. If job preemption is allowed, the best competitiveness of online strategies is
equal to $e/( e-1)\approx 1.58$~\cite{CVW2}.

Makespan minimization with job migration was first addressed by Aggarwal et al.~\cite{AMZ}. 
They consider an offline setting. An algorithm is given a schedule, in which all jobs are already assigned,
and a budget. The algorithm may perform job migrations up to the given budget. The authors
design strategies that perform well with respect to the best possible solution that can be constructed
with the budget. Online makespan minimization on $m=2$ machines was considered in~\cite{MLW,TY}. The best
competitiveness is 4/3. Sanders et al.~\cite{SSS} study an online setting in which before the assignment 
of each job $J_i$, jobs up to a total processing volume of $\beta p_i$ may be migrated, for some constant
$\beta$. For $\beta=4/3$, they present a 1.5-competitive algorithm. They also show a $(1+\epsilon)$-competitive
algorithm, for any $\epsilon >0$, where $\beta$ depends exponentially on $1/\epsilon$. The algorithms are
robust in that the stated competitive ratios hold after each job assignment. However in this framework,
over time, $\Omega(n)$ migrations may be performed and jobs of total processing volume  
$\beta \sum_{i=1}^n p_i$ may be moved.
 
Englert et al.~\cite{EOW} study online makespan minimization if an algorithm is given a buffer that may
be used to partially reorder the job sequence. In each step an algorithm assigns one job from the buffer
to the machines. Then the next job in $\sigma$ is admitted to the buffer. Englert et al.\ show that,
using a buffer of size $\Theta(m)$, the best competitive ratio is $W_{-1}(-1/e^2)/(1+ W_{-1}(-1/e^2))$,
where $W_{-1}$ is the Lambert $W$ function. 


{\bf Our contribution:}
We investigate online makespan minimization with limited migration. The number of job reassignments does
not depend on the length of the job sequence. We determine the exact competitiveness achieved by deterministic
algorithms, for general $m$.

In Section~\ref{sec:a1} we develop an optimal algorithm. For any $m\geq 2$, the strategy is
$\alpha_m$-competitive, where $\alpha_m$ is the solution of an equation representing load in an ideal 
machine profile for a subset of the jobs. For $m=2$, the competitive ratio is 4/3. The ratios are non-decreasing
and converge to $W_{-1}(-1/e^2)/(1+ W_{-1}(-1/e^2))\approx 1.4659$ as $m$ tends to infinity. Again, $W_{-1}$ 
is the lower branch of the Lambert $W$ function. The algorithm uses at most $(\lceil (2-\alpha_m)/(\alpha_m-1)^2\rceil+4)m$
job migrations. For $m\geq 11$, this expression is at most $7m$. For smaller machine numbers it is $8m$ to $10m$.
We note that the competitiveness of 1.4659 is considerably below the factor of roughly 1.9 obtained by
deterministic algorithms in the standard online setting. It is also below the ratio of $e/(e-1)$ attainable
if randomization or job preemption are allowed. 

In Section~\ref{sec:2} we give a matching lower bound. We show that no deterministic algorithm that uses
$o(n)$ job migrations can achieve a competitive ratio smaller than $\alpha_m$, for any $m\geq 2$. Hence
in order to beat the factor of $\alpha_m$, $\Theta(n)$ reassignments are required. Finally, 
in Section~\ref{sec:3} we trade migrations for performance. We develop a family of algorithms that is
$c$-competitive, for any constant $c$ with $5/3\leq c \leq 2$. Setting $c=5/3$ we obtain a strategy that uses
at most $4m$ job migrations. For $c=1.75$, the strategy uses no more than $2.5m$ migrations.

Our algorithms rely on a number of new ideas. All strategies classify incoming jobs into small and large
depending on a careful estimate on the optimum makespan. The algorithms consist of a job arrival
phase followed by a migration phase. The optimal algorithm, in the arrival phase, maintains a load profile
on the machines with respect to jobs that are currently small. In the migration phase, the algorithm
removes a certain number of jobs from each machine. These jobs are then rescheduled using strategies
by Graham~\cite{G,G2}. Our family of algorithms partitions the $m$ machines into two sets $A$ and $B$. In the 
arrival phase the algorithms prefer to place jobs on machines in $A$ so that machines in $B$ are available 
for later migration. In general, the main challenge in the analyses of the various algorithms is to bound the number 
of jobs that have to be migrated from each machine. 

We finally relate our contributions to some existing results. First we point out that the goal in online
makespan minimization is to construct a good schedule when jobs arrive one by one. Once the schedule is 
constructed, the processing of the jobs may start. It is not stipulated that machines start executing jobs
while other jobs of $\sigma$ still need to be scheduled. This framework is assumed in all the literature on online
makespan minimization mentioned above. Consequently it is no drawback to perform job 
migrations when the entire job sequence has arrived. Nonetheless, as for the algorithms presented in this paper, 
the machines can start processing jobs except for the up to 10 largest jobs on each machine. A second remark
is that the algorithms by Aggarwal et al.~\cite{AMZ} cannot be used to achieve good results in the online
setting. The reason is that those strategies are designed to perform well relative to the best possible
makespan attainable from an initial schedule using a given migration budget. The strategies need not
perform well compared to a globally optimal schedule. The algorithms by Aggarwal et al.\ and ours are different, 
see~\cite{AMZ}. 

On the other hand, our results exhibit similarities to those by Englert et al.~\cite{EOW} where 
a reordering buffer is given. The optimal competitive ratio of $\alpha_m$ is the solution of an equation
that also arises in~\cite{EOW}. This is due to the fact that our optimal algorithm and that in~\cite{EOW}
maintain a certain load profile on the machines. Our strategy does so w.r.t.\ jobs that are currently
small while the strategy in~\cite{EOW} considers all jobs assigned to machines. In
our framework the profile is harder to maintain because of {\em shrinking jobs\/}, i.e.\
jobs that are large at some time $t$ but small at later times $t'>t$. In the job migration phase
our algorithm reschedules jobs removed from some machines. This operation corresponds to the 
''final phase'' of the algorithm in~\cite{EOW}. However, our algorithm directly applies policies
by Graham~\cite{G,G2} while the algorithm in~\cite{EOW} computes a virtual schedule. 

In general, an interesting question is if makespan minimization
with limited migration is equivalent to makespan minimization with a bounded reordering buffer. We cannot 
prove this in the affirmative. As for the specific algorithms presented in~\cite{EOW} and in this
paper, the following relation holds. All our algorithms can be transformed into strategies with a 
reordering buffer. The competitive ratios are preserved and the number of job migrations is equal to 
the buffer size. This transformation is possible because our algorithms are {\em monotone\/}: If a job does
not have to be migrated at time $t$, assuming $\sigma$ ended at time $t$, then there is no need to migrate it
at times $t'>t$. Hence, at any time a buffer can store the candidate jobs to be migrated.
On the other hand, to the best of our knowledge, the algorithms by Englert et al.~\cite{EOW} do not translate into
strategies with job migration. All the algorithms in~\cite{EOW} use the given buffer of size $cm$, for some constant
$c$, to store the $cm$ largest jobs of the job sequence. However in our setting, a migration of the 
largest jobs does not generate good schedules. The problem are  shrinking jobs, i.e.\  jobs that are among the
largest jobs at some time $t$ but not at later times. We cannot afford to
migrate all shrinking jobs, unless we invest $\Theta(n)$ migrations. With limited job migration, 
scheduling decisions are final for almost all of the jobs. Hence the corresponding 
algorithms are more involved than in the setting with a reordering buffer.

\section{An optimal algorithm}\label{sec:a1}


For the description of the algorithm and the attained competitive ratio we define a function $f_m(\alpha)$.
Intuitively, $f_m(\alpha)$ represents accumulated normalized load in a ``perfect'' machine profile for a subset 
of the jobs. In such a profile the load ratios of the first $\lfloor m/\alpha\rfloor$ machines follow a Harmonic 
series of the form $(\alpha-1)/(m-1), \ldots, (\alpha-1)/(m-\lfloor m/\alpha\rfloor)$ while the remaining
ratios are $\alpha/m$. Summing up these ratios we obtain $f_m(\alpha)$. Formally, let 
$$f_m(\alpha) = (\alpha-1)(H_{m-1}-H_{\lceil(1-1/\alpha)m\rceil-1}) + \lceil(1-1/\alpha)m\rceil \alpha/m,$$
for any machine number $m\geq 2$ and real-valued $\alpha>1$. Here $H_k = \sum_{i=1}^k 1/i$ denotes the
$k$-th Harmonic number, for any integer $k\geq 1$. We set $H_0 = 0$. 
For any fixed $m\geq 2$, let $\alpha_m$ be the value satisfying $f_m(\alpha)=1$. 
Lemma~\ref{lem:l1} below implies that $\alpha_m$ is well-defined. The algorithm we present is exactly
$\alpha_m$-competitive. By Lemma~\ref{lem:l2}, the values $\alpha_m$ form a non-decreasing sequence.
There holds $\alpha_2 = 4/3$ and $\lim_{m\rightarrow \infty} \alpha_m = W_{-1}(-1/e^2)/(1+ W_{-1}(-1/e^2))\approx 1.4659$. This convergence was also stated by Englert et al.~\cite{EOW} but no thorough proof was presented.
The following two technical lemmas are proven in the appendix.
\begin{lemma}\label{lem:l1}
The function $f_m(\alpha)$ is continuous and strictly increasing in $\alpha$, for any integer $m\geq 2$ and real number $\alpha>1$.
There holds $f_m(1+1/(3m)) <1$ and $f_m(2) \geq 1$. 
\end{lemma}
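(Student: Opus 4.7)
\textbf{Proof plan for Lemma~\ref{lem:l1}.}

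The function $f_m(\alpha)$ depends on $\alpha$ only through the two explicit factors $\alpha-1$ and $\alpha/m$ and through the integer $k(\alpha) := \lceil (1-1/\alpha)m\rceil$. Since $(1-1/\alpha)m$ is itself continuous and strictly increasing in $\alpha$, $k(\alpha)$ is a nondecreasing step function, jumping upward by exactly $1$ at each value $\alpha^{\ast}$ for which $(1-1/\alpha^{\ast})m$ is an integer. Write $k^{\ast}=(1-1/\alpha^{\ast})m$, so $\alpha^{\ast}=m/(m-k^{\ast})$ and $\alpha^{\ast}-1 = k^{\ast}/(m-k^{\ast})$. My plan is to handle continuity, monotonicity, and the two numerical bounds in four short steps.

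First I would establish continuity. Away from the jump points $\alpha^{\ast}$, $f_m$ is clearly continuous since $k$ is locally constant and the remaining factors are affine in $\alpha$. At a jump point $\alpha^{\ast}$, the function value uses $k=k^{\ast}$ (because $\lceil\cdot\rceil$ equals the integer itself), whereas the right limit uses $k=k^{\ast}+1$. A direct computation of the right limit minus the value at $\alpha^{\ast}$ yields
\[
(\alpha^{\ast}-1)(H_{k^{\ast}-1}-H_{k^{\ast}}) + \alpha^{\ast}/m
= -\frac{\alpha^{\ast}-1}{k^{\ast}} + \frac{\alpha^{\ast}}{m}
= -\frac{1}{m-k^{\ast}} + \frac{1}{m-k^{\ast}} = 0,
\]
so $f_m$ is continuous at $\alpha^{\ast}$ as well. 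This is the only delicate calculation.

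Next I would prove strict monotonicity. On any open interval on which $k(\alpha)$ is constant, $f_m(\alpha)$ is affine in $\alpha$ with slope
\[
(H_{m-1}-H_{k-1}) + k/m.
\]
In the range of interest ($1<\alpha\le 2$) we have $1\le k\le m-1$, so both summands are nonnegative and the first is strictly positive. Hence $f_m$ is strictly increasing on each piece, and combined with the continuity established above this gives strict monotonicity on all of $(1,\infty)$.

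For the upper bound $f_m(2)\geq 1$, I would plug in $\alpha=2$, giving $k=\lceil m/2\rceil$, and observe that the second summand $\lceil m/2\rceil\cdot 2/m$ is already $\geq 1$, while the first summand $(H_{m-1}-H_{k-1})$ is nonnegative. For $f_m(1+1/(3m))<1$, note that $(1-1/\alpha)m = m/(3m+1) < 1$, so $k=1$ and
\[
f_m\!\left(1+\tfrac{1}{3m}\right) = \frac{H_{m-1}}{3m} + \frac{1}{m} + \frac{1}{3m^{2}}.
\]
Using the standard bound $H_{m-1}\le 1+\ln(m-1)$, the claim reduces to $3m^{2}-4m-m\ln m-1 > 0$, which I would verify by checking $m=2$ directly and noting that the left-hand side is eventually dominated by $3m^{2}$; a short monotonicity argument (or just a case split on a handful of small $m$) closes the gap.

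The only step requiring genuine care is the continuity check at the jump points, where the cancellation between the harmonic-sum piece and the boundary term is essential; the remaining parts are routine estimates.
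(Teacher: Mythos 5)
Your proposal is correct and follows essentially the same route as the paper: decompose by the step function $k(\alpha)=\lceil(1-1/\alpha)m\rceil$, observe that $f_m$ is affine with a positive slope on each piece, and verify continuity at the jump points via the cancellation $-(\alpha^{\ast}-1)/k^{\ast}+\alpha^{\ast}/m=0$ — this is precisely the computation the paper performs in its second case. Two small remarks. Your restriction to $1<\alpha\le 2$ when arguing monotonicity is unnecessary and, taken literally, leaves the lemma (stated for all $\alpha>1$) unproven on $(2,\infty)$; but the slope $(H_{m-1}-H_{k-1})+k/m$ is strictly positive for every $k$ with $1\le k\le m$ because the second summand alone is positive, so the argument extends verbatim. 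For the bound $f_m\bigl(1+\tfrac{1}{3m}\bigr)<1$, your route through $H_{m-1}\le 1+\ln(m-1)$ and a quadratic-vs-logarithmic comparison works but requires a case split on small $m$; the paper instead bounds each of the three terms crudely — $H_{m-1}/(3m)<1/3$ since $H_{m-1}<m$, $1/m\le 1/2$, $1/(3m^2)\le 1/12$ — giving $f_m\bigl(1+\tfrac{1}{3m}\bigr)<11/12<1$ in one line with no asymptotics needed.
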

\begin{lemma}\label{lem:l2}
The sequence $(\alpha_m)_{m\geq 2}$ is non-decreasing with $\alpha_2= 4/3$ and $\lim_{m\rightarrow \infty} \alpha_m
 = W_{-1}(-1/e^2)/(1+ W_{-1}(-1/e^2))$.
\end{lemma}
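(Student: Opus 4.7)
The plan is to verify the base case, establish monotonicity by a pointwise comparison of $f_m$ and $f_{m+1}$, identify the limit function via Harmonic-number asymptotics, and finally solve the limiting equation with a Lambert $W$ substitution.

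For the base case, plugging $\alpha = 4/3$ and $m=2$ gives $\lceil (1-3/4)\cdot 2\rceil = 1$, so $f_2(4/3) = (1/3)(H_1 - H_0) + 1\cdot(4/3)/2 = 1/3 + 2/3 = 1$. By Lemma~\ref{lem:l1}, $f_2$ is strictly increasing, so $\alpha_2 = 4/3$.

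For monotonicity, since $f_{m+1}$ is strictly increasing, it suffices to show $f_{m+1}(\alpha) \leq f_m(\alpha)$ for every $\alpha > 1$: this gives $f_{m+1}(\alpha_m) \leq 1 = f_{m+1}(\alpha_{m+1})$ and hence $\alpha_{m+1} \geq \alpha_m$. Write $k = \lceil m(1-1/\alpha)\rceil$ and $k' = \lceil (m+1)(1-1/\alpha)\rceil$; one checks $k' \in \{k,k+1\}$. If $k'=k$, a direct calculation gives $f_{m+1}(\alpha) - f_m(\alpha) = [(\alpha-1)(m+1) - k\alpha]/[m(m+1)]$, which is $\leq 0$ since $k = k' \geq (m+1)(1-1/\alpha)$. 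If $k' = k+1$, one gets $f_{m+1}(\alpha) - f_m(\alpha) = (m-k)[\alpha k - (\alpha-1)(m+1)]/[km(m+1)]$, which is $\leq 0$ because $k+1 = \lceil (m+1)(1-1/\alpha)\rceil$ forces $k < (m+1)(1-1/\alpha)$ and $m \geq k$ holds by the definition of $k$.

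For the limit, using $H_n = \ln n + \gamma + O(1/n)$ together with $k/m \to 1 - 1/\alpha$ yields $H_{m-1} - H_{k-1} \to \ln(\alpha/(\alpha-1))$ and $k\alpha/m \to \alpha - 1$, so $f_m(\alpha)$ converges pointwise to $g(\alpha) := (\alpha-1)[\ln(\alpha/(\alpha-1)) + 1]$. The sequence $(\alpha_m)$ is non-decreasing and bounded above by $2$ (from $f_m(2) \geq 1$ in Lemma~\ref{lem:l1}), hence has a limit $\alpha^*$. A standard squeeze rules out $g(\alpha^*) \neq 1$: if $g(\alpha^*) > 1$ pick $\alpha' < \alpha^*$ with $g(\alpha') > 1$ by continuity, so eventually $f_m(\alpha') > 1$ and $\alpha_m < \alpha'$, contradicting $\alpha_m \to \alpha^*$; the case $g(\alpha^*) < 1$ is symmetric using some $\alpha' > \alpha^*$. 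Therefore $g(\alpha^*) = 1$.

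Finally, the substitution $\beta = \alpha^*/(\alpha^* - 1)$ gives $\alpha^* - 1 = 1/(\beta - 1)$ and converts $g(\alpha^*) = 1$ into $\ln \beta = \beta - 2$, equivalently $(-\beta)e^{-\beta} = -1/e^2$. Since $\alpha^* > 1$ forces $\beta > 1$ and hence $-\beta < -1$, the lower branch of $W$ applies, giving $-\beta = W_{-1}(-1/e^2)$ and therefore $\alpha^* = \beta/(\beta-1) = W_{-1}(-1/e^2)/(1 + W_{-1}(-1/e^2))$. The main technical point I anticipate is the ceiling bookkeeping in the monotonicity step; once that is organized, the limit identification and the Lambert $W$ inversion are short.
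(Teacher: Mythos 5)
Your proof is correct and follows the same overall plan as the paper: verify $f_2(4/3)=1$, establish monotonicity by comparing $f_m$ at adjacent indices, identify the limiting equation via Harmonic-number asymptotics, and invert via the Lambert $W$ function. The execution differs in two places, both in your favour. In the monotonicity step, by keeping the ceiling value $k=\lceil m(1-1/\alpha)\rceil$ as an integer and using only the inequalities $k\geq (m+1)(1-1/\alpha)$ (when $k'=k$) and $k<(m+1)(1-1/\alpha)$ together with $k\leq m$ (when $k'=k+1$), you arrive at the sign conclusion directly; the paper instead writes $\lceil(1-1/\alpha)m\rceil=(1-1/\alpha)m+x$ with a fractional part $x\in[0,1)$ and analyzes a quadratic $g(x)=mx-\alpha x^2+\alpha x$, which forces it to carry the side condition $\alpha\leq m$ (and then to check $\alpha_{m-1}\leq m-1$). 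Your version avoids that bookkeeping. For the limit, you prove pointwise convergence $f_m(\alpha)\to g(\alpha)=(\alpha-1)\bigl[\ln(\alpha/(\alpha-1))+1\bigr]$ for each fixed $\alpha>1$ and then run an abstract squeeze on the monotone bounded sequence $(\alpha_m)$; the paper instead derives an explicit one-sided lower bound $f_m(\alpha)>F(m)$ from Ces\`aro's estimate and remarks that the upper bound follows ``by the same techniques.'' Your squeeze is cleaner, since pointwise convergence (which only needs $H_n=\ln n+\gamma+O(1/n)$ and $k/m\to 1-1/\alpha$) already gives both directions at once. The Lambert $W$ inversion via $\beta=\alpha^*/(\alpha^*-1)$, reducing $g(\alpha^*)=1$ to $\ln\beta=\beta-2$ and hence $(-\beta)e^{-\beta}=-1/e^2$ with $-\beta<-1$, matches the paper's computation.
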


\subsection{Description of the algorithm}
Let $m\geq 2$ and $M_1, \ldots, M_m$ be the available machines. Furthermore, let $\alpha_m$ be as defined above. 
The algorithm, called {\em ALG($\alpha_m$)}, operates
in two phases, a {\em job arrival phase\/} and a {\em job migration phase\/}. In the job arrival phase all
jobs of $\sigma = J_1, \ldots, J_n$ are assigned one by one to the machines. In this phase
no job migrations are performed. Once $\sigma$ is scheduled, the job migration phase starts.
First the algorithm removes some jobs from the machines. Then these jobs are reassigned to other machines.

{\bf Job arrival phase.} In this phase {\em ALG($\alpha_m$)} classifies jobs into small and large and,
moreover, maintains a load profile with respect to the small jobs on the machines. At any time the load
of a machine is the sum of the processing times of the jobs currently assigned to it. Let {\em time $t$} be 
the time when $J_t$ has to be scheduled, $1\leq t \leq n$. 

In order to classify jobs {\em ALG($\alpha_m$)} maintains a lower bound $L_t$ on the optimum makespan. 
Let $p_t^+ = \sum_{i=1}^t p_i$ be the sum of the processing times of the first $t$ jobs. Furthermore,
for $i=1, \ldots, 2m+1$, let $p_t^i$ denote the processing time of the $i$-th largest job in 
$J_1, \ldots, J_t$, provided that $i\leq t$. More formally, if $i\leq t$, let $p_t^i$ be the processing
time of the $i$-th largest job; otherwise we set $p_t^i=0$. 
Obviously, when $t$ jobs have arrived, the optimum makespan cannot be smaller than the average load
${1\over m} p_t^+$ on the $m$ machines. Moreover, the optimum makespan cannot be smaller than $3p_t^{2m+1}$,
which is three times the processing time of $(2m+1)$-st largest job seen so far. Define
$$L_t = \max\{\textstyle{1\over m}p_t^+, 3p_t^{2m+1}\}.$$

A job $J_t$ is called {\em small\/} if $p_t \leq (\alpha_m-1)L_t$; otherwise it is {\em large}.
As the estimates $L_t$ are non-decreasing over time, a large job $J_t$ does not necessarily satisfy
$p_t > (\alpha_m-1)L_{t'}$ at times $t'>t$. Therefore we need a more refined notion of small and
large. A job $J_i$, with $i\leq t$, is {\em small at time $t$\/} if $p_i \leq (\alpha_m-1)L_t$;
otherwise it is {\em large at time $t$\/}. We introduce a final piece of notation. In the sequence
$p_t^1, \ldots, p_t^{2m}$ of the $2m$ largest processing times up to time $t$ we focus on those
that are large. More specifically, for $i=1, \ldots, 2m$, let $\hat{p}_t^i = p_t^i$ if 
$p_t^i >(\alpha_m-1)L_t$; otherwise let $\hat{p}_t^i = 0$. Define
$$\textstyle{L^*_t = {1\over m}(p_t^+ - \sum_{i=1}^{2m}\hat{p}_t^i)}.$$
Intuitively, $L_t^*$ is the average machine load ignoring jobs that are large at time $t$. Since
$\alpha_m\geq 4/3$, by Lemma~\ref{lem:l2}, and $L_t \geq 3 p_t^{2m+1}$, there can exist at most
$2m$ jobs that are large at time $t$.

\begin{figure}[t] 
{\bf Algorithm ALG($\alpha_m$):} \\[2pt]
{\em Job arrival phase.} Each $J_t$, $1\leq t \leq n$, is scheduled as follows.\\[-21pt]
\begin{itemize}
\item $J_t$ is small: Assign $J_t$ to an $M_j$ with $\ell_s(j,t)\leq \beta(j)L_t^*$.\\[-20pt]
\item $J_t$ is large: Assign $J_t$ to a least loaded machine.\\[-18pt]
\end{itemize}
{\em Job migration phase.}\\[-21pt]
\begin{itemize}
\item Job removal: Set $R:=\emptyset$. While there exists an $M_j$ with 
$\ell(j)>\max\{\beta(j)L^*,(\alpha-1)L\}$, remove the largest job from $M_j$ and add it to $R$.\\[-20pt]
\item Job reassignment: $R'= \{J_i\in R \mid p_i > (\alpha_m-1)L\}$. For $i=1, \ldots, m$, set $P_i$ contains 
$J_{r}^i$, if $i\leq |R'|$, and $J_{r}^{2m+1-i}$, if $p_r^{2m+1-i} > p_r^i/2$ and $2m+1-i\leq |R'|$. Number the sets 
in order of non-increasing total processing time. For $i=1, \ldots, m$, assign $P_i$ to a least loaded machine. 
Assign each $J_i \in R\setminus (P_1 \cup \ldots \cup P_m)$
to a least loaded machine.\\[-20pt]
\end{itemize}
\caption{The algorithm {\em ALG($\alpha_m$)}.}\label{fig:1}
\end{figure}

We describe the scheduling steps in the job arrival phase. Initially, the machines are
numbered in an arbitrary way and this numbering $M_1, \ldots, M_m$ remains fixed throughout the
execution of {\em ALG($\alpha_m$)}. As mentioned above the algorithm maintains a load profile on the 
machines as far as small jobs are concerned. Define
$$ \beta(j) = \left\{ \begin{array}{ll}
 (\alpha_m-1){m\over m-j} & \mbox{if}\  j \leq \lfloor m/\alpha_m\rfloor\\[4pt]
  \alpha_m & \mbox{otherwise.}
  \end{array}
          \right.
$$
We observe that $f_m(\alpha_m) = {1\over m}\sum_{j=1}^m \beta(j)$, taking into account that 
$m - \lfloor m/\alpha_m \rfloor = \lceil (1-1/\alpha_m)m\rceil$. For any machine $M_j$ $1\leq j \leq m$,
let $\ell(j,t)$ denote its load at time $t$ {\em before} $J_t$ is assigned to a machine. Let
$\ell_s(j,t)$ be the load caused by the jobs on $M_j$ that are small at time $t$. 
{\em ALG($\alpha_m$)} ensures that at any time $t$ there exists a machine $M_j$ satisfying
$\ell_s(j,t)\leq \beta(j)L_t^*$.

For $t=1, \ldots, n$, each $J_t$ is scheduled as follows. If $J_t$ is small, then it is scheduled on
a machine with $\ell_s(j,t)\leq \beta(j)L_t^*$. In Lemma~\ref{lem:l3} we show that such a
machine always exists. If $J_t$ is large, then it is assigned to a machine having the smallest
load among all machines. At the end of the phase let $L = L_n$ and $L^*=L_n^*$.

{\bf Job migration phase.} This phase consists of a {\em job removal step\/} followed by
a {\em job reassignment step\/}. At any time during the phase, let $\ell(j)$ denote the 
current load of $M_j$, $1\leq j\leq m$. In the removal step {\em ALG($\alpha_m$)} maintains a set
$R$ of removed jobs. Initially $R=\emptyset$. During the removal step, while there exists a machine
$M_j$ whose load $\ell(j)$ exceeds $\max\{\beta(j)L^*,(\alpha_m-1)L\}$, {\em ALG($\alpha_m$)}
removes the job with the largest processing time currently residing on $M_j$ and adds the job to
$R$. 

If $R= \emptyset$ at the end of the removal step, then {\em ALG($\alpha_m$)} terminates. If $R\neq \emptyset$,
then the reassignment step is executed. Let $R'\subseteq R$ be the subset of the jobs that
are large at the end of $\sigma$, i.e.\ whose processing time is greater than $(\alpha_m-1)L$.
Again there can exist at most $2m$ such jobs. {\em ALG($\alpha_m$)} first sorts the jobs of $R'$ in order
of non-increasing processing time; ties are broken arbitrarily. Let $J_r^i$, $1\leq i \leq |R'|$,
be the $i$-th job in this sorted sequence and $p_r^i$ be its processing time. For $i=1,\ldots,m$,
{\em ALG($\alpha_m$)} forms jobs pairs consisting of the $i$-th largest and the $(2m+1-i)$-th largest
jobs provided that the processing time of the latter job is sufficiently high. A pairing
strategy combining the $i$-th largest and the $(2m+1-i)$-th largest jobs was also used by Graham~\cite{G2}. 
Formally, {\em ALG($\alpha_m$)}
builds sets $P_1, \ldots, P_m$ that contain up to two jobs. Initially, all these sets are empty.
In a first step $J_r^i$ is assigned to $P_i$, for any $i$ with $1\leq i \leq \min\{m,|R'|\}$. In a 
second step $J_r^{2m+1-i}$ is added to $P_i$ provided that $p_r^{2m+1-i}> p_r^i/2$, i.e.\ the
processing time of $J_r^{2m+1-i}$ must be greater than half times that of $J_r^i$. This second step
is executed for any $i$ such that $1\leq i\leq m$ and $2m+1-i\leq |R'|$. For any set $P_i$, $1\leq i\leq m$, 
let $\pi_i$ be the total summed processing time of the jobs in $P_i$. {\em ALG($\alpha_m$)\/} now renumbers 
the sets in order of non-increasing $\pi_i$ values such that $\pi_1\geq \ldots \geq \pi_m$. Then, for $i=1, \ldots, m$, 
it takes the set $P_i$ and assigns the jobs of $P_i$ to a machine with the smallest current load. If $P_i$ contains two jobs,
then both are placed on the same machine. Finally, if $R\setminus (P_1\cup \ldots \cup P_m) \neq \emptyset$, then 
{\em ALG($\alpha_m$)\/}  takes care of the remaining jobs. These jobs may be scheduled in an arbitrary order. Each
job of $R\setminus (P_1\cup \ldots \cup P_m)$ is scheduled on a machine having the smallest current load. 
This concludes the description of {\em ALG($\alpha_m$)\/}. A summary in pseudo-code is given in 
Figure~\ref{fig:1}.

\begin{theorem}\label{th:1}
ALG($\alpha_m$) is $\alpha_m$-competitive and uses at most $(\lceil (2-\alpha_m)/(\alpha_m-1)^2\rceil+4)m$ 
job migrations.
\end{theorem}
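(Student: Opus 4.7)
The proof naturally splits into a competitive-ratio bound and a migration-count bound.

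\textbf{Competitiveness.} First I would record $\mathit{OPT} \geq L = L_n$, which follows from the two defining bounds in $L_t$: the average load $\frac{1}{m} p_t^+$, and $3 p_t^{2m+1}$ (any schedule of $2m+1$ jobs on $m$ machines must stack three on some machine). A short calculation gives $\beta(j) \leq \alpha_m$ for every $j$ (immediate for $j > \lfloor m/\alpha_m \rfloor$; for $j \leq \lfloor m/\alpha_m \rfloor$ one uses $j \leq m/\alpha_m$), so the removal step's termination rule yields $\ell(j) \leq \max\{\beta(j) L^*, (\alpha_m-1) L\} \leq \alpha_m L \leq \alpha_m \cdot \mathit{OPT}$ on every machine the moment removal ends. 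The substance of the competitiveness proof is then showing that the reassignment step cannot break this bound. Let $M_k$ be the machine attaining the final makespan and focus on the last object placed on it. If it is a leftover singleton from $R \setminus (P_1 \cup \cdots \cup P_m)$, it is necessarily a job small at time $n$ (so of size $\leq (\alpha_m - 1) L$) placed on the then-least-loaded machine; a Graham-type averaging inequality closes the bound. If it is a pair $P_i$, I would bound $\pi_i$ by $2 p_r^i$ when $|P_i| = 2$ and by $p_r^i$ otherwise, and combine this with the pigeonhole fact $p_n^i + p_n^{2m+1-i} \leq \mathit{OPT}$ for $i \leq m$, the non-increasing ordering $\pi_1 \geq \cdots \geq \pi_m$, and LPT-like placement on the least loaded machine. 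The calibration of the profile $\beta(j)$ and of $\alpha_m$ as the unique root of $f_m(\alpha) = 1$ in $(1,2)$ (Lemma~\ref{lem:l1}) is exactly what makes the resulting chain of inequalities close with no slack.

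\textbf{Migration bound.} The count is done machine by machine. I would partition the jobs removed from $M_j$ into (a) those still large at time $n$ and (b) those small at time $n$. For (a), system-wide at most $2m$ such jobs exist, since $L \geq 3 p_n^{2m+1}$ combined with $\alpha_m - 1 > 1/3$ forces any time-$n$ large job to lie among the top $2m$; summed over all machines these contribute $O(m)$. For (b), I would upper bound the arrival-phase load $U_j$ of $M_j$ using the small-load profile invariant together with the fact that arrival-time large jobs are dispatched to the least loaded machine. Because the removal step takes the currently largest job first, once the removals have eliminated all time-$n$ large jobs each further drop is by an amount that the removal condition forces to be at least of order $(\alpha_m-1) L$ (the load must exceed $(\alpha_m - 1) L$ up to the final step). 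Equating the total load loss to the excess $U_j - T_j$ where $T_j = \max\{\beta(j)L^*, (\alpha_m-1)L\}$ then bounds the number of small removals on $M_j$ by $\lceil (2 - \alpha_m)/(\alpha_m - 1)^2 \rceil$, with an additive constant of $4$ absorbing large removals, boundary terms, and the $+1$ from the last drop.

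\textbf{Main obstacle.} The hardest piece is the reassignment analysis: one has to combine the removal-phase upper bound $\max\{\beta(j) L^*, (\alpha_m-1) L\}$, the pigeonhole bound on the top-$2m$ pairs, the non-increasing ordering of $\pi_i$, the LPT-like placement on the least loaded machine, and the definition of $\alpha_m$ via $f_m(\alpha_m) = 1$, all simultaneously and with no slack. The appearance of the same $f_m$ in the reordering-buffer analysis of Englert et al.~\cite{EOW} is no coincidence, as our profile $\beta(j) L^*$ is exactly the one encoded in their tight bound; any looseness in bounding $\pi_i$ or in the profile would force $\alpha_m$ to be suboptimal. A second delicate point is the control of shrinking jobs in the migration-count argument, since their presence is what obstructs a direct $O(m)$ bound and necessitates the $(2-\alpha_m)/(\alpha_m-1)^2$ factor.
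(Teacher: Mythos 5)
Your proposal has the right skeleton — bound $\mathit{OPT}\ge L$, show $\max\{\beta(j)L^*,(\alpha_m-1)L\}\le\alpha_m L$, analyze the reassignment step, and count removals per machine — but several of the key steps you sketch are not actually true or don't match the mechanism the paper needs.

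\textbf{Competitiveness: the pigeonhole fact is false and leftovers are not all small.}
You rely on the inequality $p_n^i + p_n^{2m+1-i}\le \mathit{OPT}$ for all $i\le m$. This fails: take $m=2$ and jobs of sizes $10,1,1,1$; then $p^1+p^4=11>10=\mathit{OPT}$. The paper instead shows $\pi_1\le\mathit{OPT}$ by a more delicate argument: when $P_1$ is a pair, the pairing test $p_r^{2m+1-i_1}>p_r^{i_1}/2$ forces every job in $J_r^1,\dots,J_r^{2m+1-i_1}$ to have size $>\pi_1/3\ge\mathit{OPT}'/3$, and \emph{that} hypothesis is what makes Graham's 1969 pairing theorem apply, giving $\pi_1=\mathit{OPT}'\le\mathit{OPT}$. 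Without the $1/2$ threshold in the algorithm, your pigeonhole bound would be exactly what you'd need — and it's exactly what isn't available. Relatedly, you assert that every leftover in $R\setminus(P_1\cup\cdots\cup P_m)$ is small at time $n$. This is not so: jobs $J_r^{2m+1-i_0}$ that failed the $1/2$ pairing test are still in $R'$, i.e.\ large at time $n+1$. The paper handles these via a second Graham argument showing any such unpaired $J_r^{2m+1-i_0}$ has size $\le\mathit{OPT}/3\le(\alpha_m-1)\mathit{OPT}$; your proposal does not account for them.

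\textbf{Migration count: the per-drop lower bound and the $+4$ are both off.}
You claim each removal after the large-at-time-$n$ jobs are gone drops the load by roughly $(\alpha_m-1)L$. Jobs that are small at time $n$ have size \emph{at most} $(\alpha_m-1)L$, so this is the wrong direction, and the denominator in $\lceil(2-\alpha_m)/(\alpha_m-1)^2\rceil$ would then not come out. The paper's mechanism is different: it partitions $M_j$'s jobs into $T_1$ (small at time $t^*-1$), $T_2$ (large at time $t^*-1$), $T_3$ (assigned at/after $t^*$), where $t^*$ is the first time after which the profile constraint is permanently violated. Every $T_3$ job is large at its assignment time $l$, so has size $>(\alpha_m-1)L_l$; either $L_l\ge(\alpha_m-1)L$ — giving the $(\alpha_m-1)^2L$ per-job lower bound needed — or $L_l<(\alpha_m-1)L$, in which case $M_j$'s load just before assigning that job was already $\le L_l<(\alpha_m-1)L$ and only the subsequent jobs need removal. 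This dichotomy (the paper's Claims~\ref{lem:r1}.\ref{c:r15}--\ref{lem:r1}.\ref{c:r16}) is the heart of the count and is missing from your sketch. The $+4$ likewise has a precise source ($|T_2|\le 3$ plus one last job of $T_1$), not a catch-all absorbing constant. Your idea of splitting globally into large and small removals could in principle work, but as written the accounting doesn't close; the paper gives a per-machine bound of $\lceil(2-\alpha_m)/(\alpha_m-1)^2\rceil+4$, which is what multiplies out to the stated total.
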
 
As we shall see in the analysis of {\em ALG($\alpha_m$)} in the job migration phase the algorithm has to
remove at most $\mu_m= \lceil (2-\alpha_m)/(\alpha_m-1)^2\rceil+4$ jobs from each machine. 
Table~\ref{t:1} depicts the competitive ratios $\alpha_m$ (exactly and approximately) and the migration numbers
$\mu_m$, for small values of $m$. We point out that $\alpha_m$ is a rational number, for any $m\geq 2$.

\newcommand{\tenty}[1]{\begin{minipage}{0.8cm}\vspace{2pt} \centering  $#1$  \end{minipage}}
\begin{table}
\begin{center}
\begin{tabular}{|c|c|c|c|c|c|c|c|c|c|c|}
\hline
$m$ &2 & 3 &4&5&6&7&8&9&10&11\\
\hline
\hline
$\alpha_m$& $\tenty{4\over 3}$ & $\tenty{15 \over 11}$ & $\tenty{11\over 8}$& $\tenty{125\over 89}$ & $\tenty{137\over 97}$& $\tenty{273\over 193}$ & $\tenty{586\over 411}$ & $\tenty{1863\over 1303}$ & $\tenty{5029\over 3517}$ & $\tenty{58091\over 40451}$ \\
$\approx$& & $1.3636$ & $1.375$& $1.4045$ & $1.4124$& $1.4145$ & $1.4258$ & $1.4298$ & $1.4299$ & $1.4360$ \\
\hline
$\mu_m$&10&9&9&8&8&8&8&8&8&7\\
\hline
\end{tabular}
\end{center}
\vspace*{-0.3cm}
\caption{The values of $\alpha_m$ and $\mu_m$, for small $m$.}\label{t:1}
\end{table}

\subsection{Analysis of the algorithm}
We first show that the assignment operations in the job arrival phase are well defined. A corresponding
statement was shown by Englert et al.~\cite{EOW}. The following proof is more involved because we have to 
take care of large jobs in the current schedule. 
\begin{lemma}\label{lem:l3}
At any time $t$ there exists a machine $M_j$ satisfying $\ell_s(j,t)\leq \beta(j)L_t^*$.
\end{lemma}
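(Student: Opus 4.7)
The plan is to prove the lemma by an averaging argument: establish that $\sum_{j=1}^m \beta(j) = m$ and that $\sum_{j=1}^m \ell_s(j,t) \leq m L_t^*$, so by pigeonhole at least one machine must satisfy $\ell_s(j,t) \leq \beta(j) L_t^*$.

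First I would record the identity $\sum_{j=1}^m \beta(j) = m f_m(\alpha_m) = m$. This is a direct unpacking of the definition of $\beta(j)$, using $m - \lfloor m/\alpha_m\rfloor = \lceil (1-1/\alpha_m)m\rceil$ to match the two cases in $\beta(j)$ with the two summands of $f_m$, and invoking $f_m(\alpha_m) = 1$.

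Next, the heart of the argument is to control the total small-job load $\sum_j \ell_s(j,t)$. This is where the new difficulty (handling large jobs in the current schedule) enters, as highlighted in the text preceding the lemma. I would first show that at most $2m$ jobs can be large at time $t$: since $L_t \geq 3 p_t^{2m+1}$ and $\alpha_m \geq 4/3$ by Lemma~\ref{lem:l2}, any job large at time $t$ has processing time greater than $(\alpha_m-1)L_t \geq L_t/3 \geq p_t^{2m+1}$, hence must be among the top $2m$ longest jobs seen so far. Consequently $\hat{p}_t^1, \ldots, \hat{p}_t^{2m}$ enumerate exactly the processing times of the jobs large at time $t$. A direct accounting then gives $\sum_{j=1}^m \ell_s(j,t) = p_{t-1}^+ - \sum_{i<t,\ J_i \text{ large at } t} p_i$. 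Splitting on whether $J_t$ itself is small or large at time $t$ yields $\sum_j \ell_s(j,t) = m L_t^* - p_t$ or $\sum_j \ell_s(j,t) = m L_t^*$ respectively, so in both cases $\sum_j \ell_s(j,t) \leq m L_t^*$.

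Finally I conclude by contradiction: if every machine satisfied $\ell_s(j,t) > \beta(j) L_t^*$, summing the strict inequalities would give $\sum_j \ell_s(j,t) > L_t^* \sum_j \beta(j) = m L_t^*$, contradicting the bound above. The main obstacle I anticipate is the careful bookkeeping needed to identify the $\hat{p}_t^i$ with the full set of jobs large at time $t$ (which relies on the bound of at most $2m$ such jobs and thereby on $\alpha_m \geq 4/3$), and then to match the two case analyses for $J_t$ small or large against the definition of $L_t^*$ so that in each case the summed small-job load comes out at most $m L_t^*$.
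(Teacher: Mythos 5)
Your proposal is correct and follows essentially the same approach as the paper: both rely on the identity $\sum_j \beta(j) = m f_m(\alpha_m) = m$, the observation that at most $2m$ jobs can be large at time $t$ (so $\hat{p}_t^i$ enumerates them exactly), and the resulting bound $\sum_j \ell_s(j,t) \leq mL_t^*$, concluding by an averaging contradiction. The paper phrases it as a direct contradiction from $\sum_j \ell(j,t) + p_t = p_t^+$ rather than isolating $\sum_j \ell_s(j,t) \leq m L_t^*$ first, but the underlying accounting — including the split on whether $J_t$ is small or large — is identical.
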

\begin{proof}
Suppose that there exists a time $t$, $1\leq t\leq n$, such that $\ell_s(j,t)> \beta(j)L_t^*$
holds for all $M_j$, $1\leq j\leq m$. We will derive a contradiction. 

Among the jobs $J_1, \ldots, J_t$, at most $2m$ can be large at time $t$: If there were at least
$2m+1$ such jobs, then $L_t \geq 3p_t^{2m+1} > 3(\alpha_m-1)L_t \geq L_t$ because $\alpha_m\geq 4/3$,
see Lemma~\ref{lem:l2}. Hence each of the jobs that is large at time $t$ is represented by a positive
entry in the sequence $\hat{p}_t^1, \ldots, \hat{p}_t^{2m}$. Conversely, every positive entry in this
sequence corresponds to a job that is large at time $t$ and resides on one of the $m$ machines or
is equal to $J_t$ if $J_t$ is large. Hence if $J_t$ is large, $\sum_{j=1}^m \ell(j,t) + p_t
= \sum_{j=1}^m \ell_s(j,t) + \sum_{i=1}^{2m}\hat{p}_t^i$. If  $J_t$ is small, then 
$\sum_{j=1}^m \ell(j,t) + p_t \geq \sum_{j=1}^m \ell(j,t)= \sum_{j=1}^m \ell_s(j,t) + \sum_{i=1}^{2m}\hat{p}_t^i$.
In either case
\begin{eqnarray*}
\sum_{j=1}^m \ell(j,t) + p_t &\geq& \sum_{j=1}^m \ell_s(j,t) + \sum_{i=1}^{2m}\hat{p}_t^i \
> \ \sum_{j=1}^m \beta(j) L_t^* + \sum_{i=1}^{2m}\hat{p}_t^i\\
&=& m(\alpha_m -1)L_t^*\sum_{j=1}^{\lfloor m/\alpha_m\rfloor} 1/(m-j) + (m-\lfloor m/\alpha_m\rfloor)\alpha_m L_t^*
+\sum_{i=1}^{2m}\hat{p}_t^i.
\end{eqnarray*}
Taking into account that $m- \lfloor m/\alpha\rfloor = \lceil(1-1/\alpha_m)m \rceil$ and
that $f_m(\alpha_m) = 1$, we obtain  
\begin{eqnarray*}
\sum_{j=1}^m \ell(j,t) + p_t &>& mL_t^* ((\alpha_m-1)(H_{m-1}-H_{\lceil(1-1/\alpha_m)m\rceil-1}) + \lceil(1-1/\alpha_m)m\rceil \alpha_m/m) + \sum_{i=1}^{2m}\hat{p}_t^i\\
&=&mL_t^*f_m(\alpha_m) + \sum_{i=1}^{2m}\hat{p}_t^i  
= m(1/m\sum_{i=1}^t p_t - 1/m \sum_{i=1}^{2m}\hat{p}_t^i) + \sum_{i=1}^{2m}\hat{p}_t^i \ = \  \sum_{i=1}^t p_i.
\end{eqnarray*}
This contradicts the fact that $\sum_{j=1}^m \ell(j,t) + p_t$ is equal to the total processing 
time $\sum_{i=1}^t p_i$  of $J_1, \ldots, J_t$.
\end{proof}
We next analyze the job migration phase.
\begin{lemma}\label{lem:r1}
In the job removal step ALG($\alpha_m$) removes at most $\lceil (2-\alpha_m)/(\alpha_m-1)^2\rceil +4$ jobs
from each of the machines.
\end{lemma}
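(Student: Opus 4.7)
\textbf{Proof plan for Lemma~\ref{lem:r1}.} Fix a machine $M_j$ and let $k$ be the total number of jobs removed from it in the removal step. Since the algorithm always removes the currently largest job on a machine that still exceeds the threshold $T_j := \max\{\beta(j)L^*, (\alpha_m-1)L\}$, the removed jobs come out of $M_j$ in non-increasing order of processing time. I partition them into two groups: the $k_L$ jobs that are large at the end of the arrival phase (i.e., have processing time $> (\alpha_m-1)L$) and the $k_s$ jobs that are small at the end. Because large-at-end jobs are strictly larger than small-at-end ones, the $k_L$ large-at-end jobs are removed first. I will show $k_L\le 4$ and $k_s \le \lceil (2-\alpha_m)/(\alpha_m-1)^2\rceil$; the lemma then follows.

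\textbf{Bounding $k_L$.} A job $J_i$ that is large at the end satisfies $p_i>(\alpha_m-1)L\ge(\alpha_m-1)L_t$ for every $t\le n$, so it was large at the moment of its arrival and hence was placed on a least loaded machine during the job arrival phase. Consider the time $t$ at which the last such job ending up on $M_j$ is placed. At that moment $M_j$ already carries $k_L-1$ such jobs, so $\ell(j,t)\ge (k_L-1)(\alpha_m-1)L_t$; since $M_j$ is least loaded, every other machine has at least the same load, giving $\sum_k\ell(k,t)\ge m(k_L-1)(\alpha_m-1)L_t$. On the other hand $\sum_k\ell(k,t)\le p_t^+\le mL_t$, and therefore $k_L\le 1+1/(\alpha_m-1)$. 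By Lemma~\ref{lem:l2} $\alpha_m\ge 4/3$, so $k_L\le 4$.

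\textbf{Bounding $k_s$.} Let $J^*$ be the last small-at-end job removed from $M_j$ and let $p^*=p_{J^*}\le(\alpha_m-1)L$. Since removal is in decreasing size order, the $k_s-1$ earlier small-at-end removed jobs have size $\ge p^*$, while every small-at-end job that is left on $M_j$ after the step has size $\le p^*$. The crucial lower bound on $p^*$ comes from revisiting the moment $t^*$ at which $J^*$ was assigned to $M_j$: by Lemma~\ref{lem:l3} the algorithm put $J^*$ on $M_j$ only because $\ell_s(j,t^*)\le\beta(j)L_{t^*}^*$, which combined with the profile values $\beta(j)$ and the fact that $J^*$ is still present at the end of the arrival phase will give $p^*\ge(\alpha_m-1)^2 L/(2-\alpha_m)$ up to the ceiling rounding. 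Complementarily, an upper bound on the total load that needs to be shaved off $M_j$ below $T_j$ is obtained by bounding $\ell_s(j)$ in terms of $\beta(j)L^*$ using the profile invariant together with the contribution of shrinking jobs (those large at arrival but small at the end). Dividing the resulting excess by the lower bound on $p^*$ yields $k_s\le\lceil (2-\alpha_m)/(\alpha_m-1)^2\rceil$.

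\textbf{Main obstacle.} The subtle point is the treatment of \emph{shrinking} jobs: they were placed by the least-loaded rule (as large-at-arrival jobs) and therefore do not obey the $\beta(j)L_t^*$ profile, yet at the end of the arrival phase they contribute to $\ell_s(j)$ and can inflate the excess we must remove. I would bound both their number on $M_j$ and their total size by the same averaging technique used for $k_L$: at the time any such job is placed on $M_j$ the machine is least loaded, so their cumulative contribution to $\ell_s(j)$ beyond the profile term $\beta(j)L^*$ is $O((\alpha_m-1)L)$. Reconciling these bounds with the precise constants in $(\lceil(2-\alpha_m)/(\alpha_m-1)^2\rceil+4)$, in particular making the arithmetic tight enough to land the additive $4$ rather than a slightly weaker constant, is the most delicate bookkeeping step.
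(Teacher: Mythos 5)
Your bound $k_L\leq 4$ on removed jobs that are large at time $n+1$ is sound and matches the first part of the paper's argument (the paper gets $1/(\alpha_m-1)+1\leq 4$ in the same way, using that such jobs are placed on a least-loaded machine). The problem is the second half. Your plan hinges on a per-job size lower bound $p^*\gtrsim(\alpha_m-1)^2L/(2-\alpha_m)$ for the \emph{small}-at-end jobs that get removed, so that the excess divided by $p^*$ gives $\lceil(2-\alpha_m)/(\alpha_m-1)^2\rceil$. This lower bound is not established, and it is in general false: a removed job can be a shrinking job $J_l\in T_3$ with $p_l$ only slightly above $(\alpha_m-1)L_l$ for a small $L_l$, or even the last small-at-arrival job of $T_1$, neither of which has any useful lower bound on its size. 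You acknowledge the gap yourself (``the most delicate bookkeeping step''), but it is not a bookkeeping issue; it is the crux of the lemma.

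The paper avoids the need for such a lower bound by decomposing differently. It distinguishes three easy cases first (when $\ell_s(j,n+1)\leq\beta(j)L^*$, or when only the last arrival broke the profile), and otherwise defines $t^*$ as the earliest time after which $\ell_s(j,t)>\beta(j)L_t^*$ persists, partitioning $M_j$'s jobs into $T_1$ (assigned before $t^*$, small then), $T_2$ (assigned before $t^*$, large then, $|T_2|\leq 3$), and $T_3$ (assigned at or after $t^*$). Every job in $T_2\cup T_3$ was large at arrival, hence placed on a least-loaded machine, so $M_j$'s load just before each such assignment is at most $L_l\leq L$. If $|T_2\cup T_3|<k+4$ one simply removes all of $T_2\cup T_3$ plus the last $T_1$-job; otherwise $|T_3|\geq k+1$, and the last $k+1$ jobs of $T_3$ suffice: either all have size $\geq(\alpha_m-1)^2L$ (so stripping them from a load of at most $L$ brings it below $(\alpha_m-1)L$), or one has size $<(\alpha_m-1)^2L$, which forces $L_l<(\alpha_m-1)L$ and hence the load before it was already $\leq(\alpha_m-1)L$. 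The threshold $(\alpha_m-1)^2L$ thus functions as a \emph{case split}, not as a universal lower bound on removed-job sizes. That dichotomy is the missing idea in your proposal; without it, the ``divide excess by $p^*$'' counting does not go through.
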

\begin{proof}
Consider any $M_j$, with $1\leq j \leq m$. We show that it suffices to remove at most 
$\lceil (2-\alpha_m)/(\alpha_m-1)^2\rceil +4$ jobs so that $M_j$'s resulting load is upper bounded by 
$\max\{\beta(j)L^*,(\alpha_m-1)L\}$. Since {\em ALG($\alpha_m$)} always removes the largest jobs the
lemma follows.

Let time $n+1$ be the time when the entire job sequence $\sigma$ is scheduled and the job migration phase
with the removal step starts. A job $J_i$, with $1\leq i \leq n$, is {\em small at time $n+1$\/} if 
$p_i \leq (\alpha_m-1)L$; otherwise it is {\em large at time $n+1$}. 
Since $L=L_n$ any job that is small (large) at time $n+1$ is also small (large) at time $n$. 
Let $\ell(j,n+1)$ be the load of $M_j$ at time $n+1$. Similarly, $\ell_s(j,n+1)$ is $M_j$'s load
consisting of the jobs that are small at time $n+1$. Throughout the proof let 
$k:= \lceil (2-\alpha_m)/(\alpha_m-1)^2\rceil$.

First assume $\ell_s(j,n+1) \leq \beta(j)L^*$. If at time $n+1$ machine $M_j$ does not contain any
jobs that are large at time $n+1$, then $\ell(j,n+1) = \ell_s(j,n+1) \leq  \beta(j)L^*$. In this case
no job has to be removed and we are done. If $M_j$ does contain jobs that are large at time $n+1$, then
it suffices to remove these jobs. Let time $l$ be the last time when a job $J_l$ that is large at time
$n+1$ was assigned to $M_j$. Since $L_l \leq L$, $J_l$ was also large at time $l$ and hence it was
assigned to a least loaded machine. This implies that prior to the assignment of $J_l$, $M_j$
has a load of at most $p_l^+/m\leq L_l\leq L$. Hence it could contain at most $1/(\alpha_m-1)$ jobs that are large
at time $n+1$ because any such job has a processing time greater than $(\alpha_m-1)L$. Hence at most
$1/(\alpha_m-1)+1$ jobs have to be removed from $M_j$, and the latter expression is upper bounded by
$k+4$.

Next assume $\ell_s(j,n+1) >\beta(j)L^*$. If $\ell_s(j,n) \leq \beta(j)L^*= \beta(j)L_n^*$, 
then $J_n$ was assigned to $M_j$. In this case it suffices to remove $J_n$ and, as in the previous
case, at most $1/(\alpha_m-1)+1$ jobs that are large at time $n+1$. Again $1/(\alpha_m-1)+2\leq k+4$.

In the remainder of this proof we consider the case that $\ell_s(j,n+1) >\beta(j)L^*$ and
$\ell_s(j,n) > \beta(j)L_n^*$. Let $t^*$ be the earliest time such that $\ell_s(j,t) > \beta(j)L_t^*$
holds for all times $t^*\leq t \leq n$. We have $t^*\geq 2$ because $\ell_s(j,1)=0\leq \beta(j)L_1^*$.
Hence time $t^*-1$ exists. We partition the jobs residing on $M_j$ at time $n+1$ into three sets. 
Set $T_1$ is the set of jobs that were assigned to $M_j$ at or before time $t^*-1$ and are small at time $t^*-1$. 
Set $T_2$ contains the jobs that were assigned to $M_j$ at or before time $t^*-1$ and are large at 
time $t^*-1$. Finally $T_3$ is the set of jobs assigned to $M_j$ at or after time $t^*$. We show 
a number of claims that we will use in the further proof.
\begin{enumerate}[\bf { Claim~\ref{lem:r1}.}1.]
\item Each job in $T_2\cup T_3$ is large at the time it is assigned to $M_j$.\label{c:r11}
\item There holds $\sum_{J_i\in T_1\setminus\{J_l\}} p_i \leq \beta(j)L_{t^*-1}^*$, where $J_l$ is the 
 job of $T_1$ that was assigned last to $M_j$.\label{c:r12}
\item There holds $|T_2| \leq 3$.\label{c:r13}
\item For any $J_l\in T_3$, $M_j$'s load immediately before the assignment of $J_l$ is at most $L_l$.\label{c:r14}
\item Let $J_l\in T_3$ be the last job assigned to $M_j$. If $M_j$ contains at least $k$ jobs, 
 different from $J_l$, each having a processing time of at least $(\alpha_m-1)^2L$, then it suffices
 to remove these $k$ jobs and $J_l$ such that $M_j$'s resulting load is upper bounded by
 $(\alpha_m-1)L$. \label{c:r15}
\item If there exists a $J_l\in T_3$ with $p_l<(\alpha_m-1)^2L$, then $M_j$'s load immediately before
the assignment of $J_l$ is at most $(\alpha_m-1)L$.\label{c:r16}  
\end{enumerate}

{\em Proof of Claim~\ref{lem:r1}.\ref{c:r11}.} The jobs of $T_2$ are large at time $t^*-1$ and hence
at the time they were assigned to $M_j$. By the definition of $t^*$, $\ell_s(j,t)> \beta(j)L_t^*$
for any $t^*\leq t \leq n$. Hence {\em ALG($\alpha_m$)} does not assign small jobs to $M_j$ at or 
after time $t^*$.

{\em Proof of Claim~\ref{lem:r1}.\ref{c:r12}.} All jobs of $T_1\setminus\{J_l\}$ are small at time
$t^*-1$ and their total processing time is at most $\ell_s(j,t^*-1)$. In fact, their total processing
time is equal to $\ell_s(j,t^*-1)$ if $l=t^*-1$. By the definition of $t^*$, $\ell_s(j,t^*-1)\leq
\beta(j)L_{t^*-1}^*$.

{\em Proof of Claim~\ref{lem:r1}.\ref{c:r13}.} We show that for any time $t$, $1\leq t\leq n$, when
$J_t$ has been placed on a machine, $M_j$ can contain at most three jobs that are large at time $t$. 
The claim then follows by considering $t^*-1$. Suppose that when $J_t$ has been scheduled, $M_j$ contained
more than three jobs that are large as time $t$. Among these jobs let $J_l$ be the one that was
assigned last to $M_j$. Immediately before the assignment of $J_l$ machine $M_j$ had a load greater
than $L_l$ because the total processing time of three large jobs is greater than $3(\alpha_m-1)L_t\geq 
3(\alpha_m-1)L_l\geq L_l$ since $\alpha_m\geq 4/3$, see Lemma~\ref{lem:l2}. This contradicts the
fact that $J_l$ is placed on a least loaded machine, which has a load of at most $p_l^+/m \leq 
L_l$. 

{\em Proof of Claim~\ref{lem:r1}.\ref{c:r14}.} By Claim~\ref{lem:r1}.\ref{c:r11} $J_l$ is large at time 
$l$ and hence is assigned to a least loaded machine, which has a load of at most $p_l^+/m \leq L_l$.

{\em Proof of Claim~\ref{lem:r1}.\ref{c:r15}.} Claim~\ref{lem:r1}.\ref{c:r14} implies that
immediately before the assignment of $J_l$ machine $M_j$ has a load of at most $L_l\leq L$. If $M_j$
contains at least $k$ jobs, different from $J_l$, with a processing time of at least 
$(\alpha_m-1)^2L$, then the removal of these $k$ jobs and $J_l$ from $M_j$ leads to a machine
load of at most $L - k(\alpha_m-1)^2 L \leq L -\lceil (2-\alpha_m)/(\alpha_m-1)^2\rceil(\alpha_m-1)^2 L
\leq (\alpha_m-1)L$, as desired.

{\em Proof of Claim~\ref{lem:r1}.\ref{c:r16}.} By Claim~\ref{lem:r1}.\ref{c:r11} $J_l$ is large 
at time $l$ and hence $p_l> (\alpha_m-1)L_l$. Since $p_l<(\alpha_m-1)^2L$, it follows $L_l<(\alpha_m-1)L$.
By Claim~\ref{lem:r1}.\ref{c:r14}, $M_j$'s load prior to the assignment of $J_l$ is at most $L_l$ and 
hence at most $(\alpha_m-1)L$.

We now finish the proof of the lemma and distinguish two cases depending on the cardinality of $T_2\cup T_3$.

{\bf Case 1:} If $|T_2\cup T_3| <k+4$, then by Claim~\ref{lem:r1}.\ref{c:r12} it suffices to remove the
jobs of $T_2\cup T_3$ and the last job of $T_1$ assigned to $M_j$.

{\bf Case 2:} Suppose $|T_2\cup T_3| \geq k+4$. By Claim~\ref{lem:r1}.\ref{c:r13}, $|T_2|\leq 3$ and hence 
$|T_3|\geq k+1$. Among the jobs of $T_3$ consider the last $k+1$ ones assigned to $M_j$. If each of them
has a processing time of at least $(\alpha_m-1)^2L$, then Claim~\ref{lem:r1}.\ref{c:r15} ensures that
it suffices to remove these $k+1$ jobs. If one of them, say $J_l$,  has a processing time smaller than 
$(\alpha_m-1)^2L$, then by Claim~\ref{lem:r1}.\ref{c:r16} $M_j$'s load prior to the assignment of $J_l$
is at most $(\alpha_m-1)L$. Again it suffices to remove these $k+1$ jobs from $M_j$.
\end{proof}

After the job removal step each machine $M_j$, $1\leq j\leq m$, has a load of at most 
$\max\{\beta(j)L^*,(\alpha_m-1)L\}$. We first observe that this load is at most $\alpha_mL$.
If $(\alpha_m-1)L\geq \beta(j)L^* $, there is nothing to show. We evaluate $\beta(j)L^*$. 
If $j> \lfloor m/\alpha_m\rfloor$, then $\beta(j)=\alpha_m$ and $\beta(j)L^* = \alpha_m L^*
\leq \alpha_m L$. If $j\leq \lfloor m/\alpha_m\rfloor$, then $\beta(j) = (\alpha_m-1)m/(m-j) 
\leq (\alpha_m-1)m/(m-\lfloor m/\alpha_m\rfloor) = (\alpha_m-1)m/\lceil (1-1/\alpha_m)m\rceil)\leq \alpha_m$
and thus $\beta(j)L^* \leq \alpha_m L$. Hence $M_j$'s load is upper bounded by $\alpha_m {\mathit OPT}$, where
${\mathit OPT}$ denotes the value of the optimum makespan for the job sequence $\sigma$. The following lemma
ensures that after the reassignment step, each machine still has a load of at most $\alpha_m {\mathit OPT}$. 

\begin{lemma}
After the reassignment step each machine $M_j$, $1\leq j\leq m$, has a load of at most $\alpha_m {\mathit OPT}$.
\end{lemma}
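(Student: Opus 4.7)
The plan is to fix a machine $M^{*}$ attaining the maximum load $\ell^{*}$ at the end of the reassignment step and to prove $\ell^{*}\le\alpha_m{\mathit OPT}$ by case analysis on the last item placed on $M^{*}$ during the reassignment phase. Three cases arise: no item was placed on $M^{*}$, the last item is a small job from $R\setminus(P_1\cup\cdots\cup P_m)$, or the last item is one of the renumbered sets $P_i$.

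The first two cases are straightforward. If no item was placed on $M^{*}$, then $\ell^{*}$ is the post-removal load of $M^{*}$, which the discussion preceding the lemma already bounds by $\alpha_m L\le\alpha_m{\mathit OPT}$. If the last placement is a small job $J$ from $R\setminus(P_1\cup\cdots\cup P_m)$, then $M^{*}$ was the globally least-loaded machine at that moment, so its load did not exceed the then-current average. Since the total load across all machines never exceeds $p_n^+\le mL$, this average is at most $L\le{\mathit OPT}$, and $p_J\le(\alpha_m-1)L\le(\alpha_m-1){\mathit OPT}$ closes the bound.

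The delicate case is when the last item is a renumbered set $P_i$. At the moment $P_i$ is placed, exactly $i-1$ sets have already been placed, so at most $i-1$ machines have been touched; at least $m-i+1$ machines still carry only their post-removal loads. Sort the post-removal loads non-decreasingly as $a_1\le\cdots\le a_m$. The minimum of any $m-i+1$ of them is at most $a_i$, hence $M^{*}$'s load immediately before $P_i$ is at most $a_i$. Since the per-machine caps $u(j):=\max\{\beta(j)L^{*},(\alpha_m-1)L\}$ are non-decreasing in $j$, a simple coupling argument (every index $k\le j$ satisfies $\ell(k)\le u(k)\le u(j)$, so at least $j$ actual loads lie at or below $u(j)$) yields $a_i\le u(i)$. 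It then remains to verify $u(i)+\pi_i\le\alpha_m{\mathit OPT}$.

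For this final inequality I would exploit the Graham-style pairing. Each $P_j$ satisfies $\pi_j\le p_r^j+p_r^{2m+1-j}$; every job in $R'$ has $p_r^k\le{\mathit OPT}$, while $p_r^k\le{\mathit OPT}/2$ whenever $k\ge m+1$ by pigeonhole on the top $m+1$ jobs in the optimum schedule. After renumbering $P_1,\ldots,P_m$ in non-increasing order of $\pi$, the $i$-th largest $\pi$-value must be translated into a bound compatible with $u(i)$ through the defining identity $f_m(\alpha_m)=1$. The principal obstacle is precisely this translation: the renumbering severs the natural index-to-index correspondence between the renumbered $\pi_i$ and $p_r^i+p_r^{2m+1-i}$, so a careful combinatorial argument, very likely invoking the pair-inclusion condition $p_r^{2m+1-j}>p_r^j/2$ and the exact form of $\beta$, is required to convert the pairing bound into the clean $\alpha_m{\mathit OPT}$ estimate.
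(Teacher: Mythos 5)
Your framework is exactly the paper's: case on the last item placed, and for a set $P_i$ use the coupling argument (at most $i-1$ machines touched, and the per-machine caps $u(j)$ are non-decreasing in $j$) to get that $M^{*}$'s pre-placement load is at most $u(i)$. That part is right. But the proposal stops short precisely where the proof has to work, and the route you sketch for closing the gap would not succeed.

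The missing ingredients are two clean facts, neither of which involves tracking the renumbering at the level of indices. First, $\pi_j \le \mathit{OPT}$ for every $j$. Your pigeonhole bound $p_r^{k}\le\mathit{OPT}/2$ for $k\ge m+1$ only gives $\pi_j\le 3\mathit{OPT}/2$, which is too weak; the paper instead argues that if $P_1$ contains a pair $\{J_r^{i_1},J_r^{2m+1-i_1}\}$ then all of $J_r^1,\dots,J_r^{2m+1-i_1}$ lie in $P_1\cup\dots\cup P_m$, each exceeds $\mathit{OPT}'/3$ for that subsequence, and by Graham's pairing theorem the pairing is optimal for them, so $\pi_1=\mathit{OPT}'\le\mathit{OPT}$. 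Second, $j\pi_j\le\sum_{i=1}^{2m}\hat p_n^{\,i}$: since $\pi_1\ge\dots\ge\pi_m$ and the $P_k$ are disjoint with total mass at most that of $R'$, and since every job in $R'$ is large at time $n+1$ and hence contributes a positive entry to $(\hat p_n^{\,i})_{i\le 2m}$. This aggregate bound, plugged into $L^{*}=\frac1m(p_n^{+}-\sum_i\hat p_n^{\,i})$, lets $\beta(j)L^{*}+\pi_j$ telescope to $\alpha_m L$ without ever needing an index-to-index correspondence between the renumbered $\pi_i$ and the $p_r^{i}+p_r^{2m+1-i}$; the ``principal obstacle'' you describe is thus a red herring. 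You would also need the first fact in the case $(\alpha_m-1)L\ge\beta(j)L^{*}$, and again (via Graham) to show that unpaired jobs of $R'$ have size at most $\mathit{OPT}/3$, which your case two silently conflates with small jobs of $R\setminus R'$.
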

\begin{proof}
We show that all scheduling operations in the reassignment step preserve a load of at most $\alpha_m {\mathit OPT}$ on
each of the machines. We first consider the assignment of the sets $P_1, \ldots, P_m$. Suppose that these
sets are already sorted in order of non-increasing total processing times, i.e.\ $\pi_1\geq \ldots \geq \pi_m$.
We first argue that $\pi_1$ and hence any $\pi_i$, $1\leq i\leq m$, is upper bounded by $\mathit{OPT}$.
If $P_1$ contains at most one job, there is nothing to show because $\mathit{OPT}$ cannot be smaller than the 
processing time of any job in $\sigma$. Assume that $P_1$ contains two jobs. Then it consists of jobs 
$J_r^{i_1}$ and $J_r^{2m+1-i_1}$, for some $i_1$ with
$1\leq i_1 \leq m$. Since the two jobs are paired there holds $p_r^{2m+1-i_1} > p_r^{i_1}/2$ and
hence $p_r^{2m+1-i_1} > \pi_1/3$. Let $\mathit{OPT'}$ denote the optimum makespan for the job sequence
$J_r^1, \ldots, J_r^{2m+1-i_1}$. Since $J_r^{i_1}$ and $J_r^{2m+1-i_1}$ are paired, jobs $J_r^{i}$ and $J_r^{2m+1-i}$
are also paired, for any $i_1 <i \leq m$, because $p_r^{2m+1-i} \geq p_r^{2m+1-i_1} > p_r^{i_1}/2 \geq p_r^{i}/2$. Further, the sets $P_1, \dots, P_m$ contain all jobs $J_r^1,\dots, J_r^{i^*-1}$, and none of these was paired. Thus the sets $P_1, \ldots, P_m$ contain all the jobs $J_r^1, \ldots, J_r^{2m+1-i_1}$, which
implies $\pi_1\geq \mathit{OPT'}$ and $p_r^{2m+1-i_1} > \mathit{OPT'}/3$. It follows
$p_r^i > \mathit{OPT'}/3$, for all $i$ with $1\leq i \leq 2m+1-i_1$. Graham~\cite{G2} showed that given a sequence
of up to $2m$ jobs, each having a processing time greater than a third times the optimum makespan, an
optimal schedule is obtained by repeatedly pairing the $i$-th largest and $(2m+1-i)$-th largest jobs of the sequence. 
This is exactly the assignment computed by {\em ALG($\alpha_m$)} for $J_r^1, \ldots, J_r^{2m+1-i_1}$. 
We conclude $\pi_1= \mathit{OPT'}$ and $\pi_1\leq \mathit{OPT}$.

A final observation is that each job
of $R'$ that is not contained in $P_1\cup\ldots \cup P_m$ has a processing time of at most
$\mathit{OPT}/3$. A job in $R'\setminus (P_1\cup\ldots \cup P_m)$ is equal to a job $J_r^{2m+1-i_0}$, with
$1\leq i_0< i_1$. Since $J_r^{2m+1-i_0}$ is not paired with $J_r^{i_0}$, there holds 
$p_r^{2m+1-i_0} \leq p_r^{i_0}/2$. Assume that $p_r^{2m+1-i_0} > \mathit{OPT}/3$. Then $p_r^{2m+1-i_0}$
is greater than a third times the optimum makespan for the jobs $J_r^1, \ldots, J_r^{2m+1-i_0}$.
Using again the results by Graham~\cite{G2}, we obtain that an optimal schedule for the latter job
sequence in obtained by repeatedly pairing $J_r^i$ with $J_r^{2m+1-i}$. However, since 
$p_r^{2m+1-i_0} \leq p_r^{i_0}/2$, the processing time $p_r^{2m+1-i_0}$ is at most a third times the
resulting optimum makespan for $J_r^1, \ldots, J_r^{2m+1-i_0}$. Hence $p_r^{2m+1-i_0}$ is at most
a third times $\mathit{OPT}$, which is a contradiction. 

Next we compare the processing time of the jobs of $P_1\cup \ldots\cup P_m$ to $\sum_{i=1}^{2m}\hat{p}_n^i$.
Set $R'$ contains the jobs of $R$ that are large at time $n+1$. There exist at most $2m$ jobs that are large 
at time $n+1$ and hence the processing time of each job in $R'$ is represented by a positive entry in the sequence
$\hat{p}_n^1, \ldots, \hat{p}_n^{2m}$. It follows that the total processing time of the jobs in $R'$ 
and hence the total processing time of the jobs in $P_1\cup \ldots\cup P_m$ is at most 
$\sum_{i=1}^{2m}\hat{p}_n^i$. Recall that $\pi_1\geq \ldots \geq \pi_m$. Then, for any $j$ with $1\leq j\leq m$, the
product $j\pi_j$ is upper bounded by the total processing time of $P_1\cup \ldots\cup P_m$ and hence
$j\pi_j\leq \sum_{i=1}^{2m}\hat{p}_n^i$.

Now consider the assignment of the sets $P_1, \ldots, P_m$ to the machines. Each set is assigned to
a least loaded machine. Hence when $P_j$, $1\leq j \leq m$, is scheduled, it is assigned to a machine
whose current load is at most $\max\{\beta(j)L^*,(\alpha_m-1)L\}$. If the load is at most $(\alpha_m-1)L$,
then the machine's load after the assignment is at most $(\alpha_m-1)L+\pi_j \leq (\alpha_m-1)L+\mathit{OPT}
\leq \alpha_m\mathit{OPT}$. If the current load is only upper bounded by $\beta(j)L^*$, then we distinguish
two cases. 

If $j\leq \lfloor m/\alpha_m\rfloor$, then $j\leq m/\alpha_m$, which is equivalent to 
$m/(m-j) \leq \alpha_m/(\alpha_m-1)$. The resulting machine load is at most
$$ \beta(j)L^*+\pi_j = (\alpha_m-1){m\over m-j}({1\over m}\sum_{i=1}^n p_i - {1\over m}\sum_{i=1}^{2m}\hat{p}_t^j) +\pi_j \leq (\alpha_m-1){1\over m-j}(mL - j\pi_j) +\pi_j.$$
The last inequality follows because, as argued above, $j\pi_j \leq \sum_{i=1}^{2m}\hat{p}_t^i$. It 
follows that the machine load is upper bounded by 
$$(\alpha_m-1)\textstyle{1\over m-j}(mL - m\pi_j) +\alpha_m\pi_j 
\leq \alpha_m(L-\pi_j) + \alpha_m\pi_j= \alpha_mL.$$
The last inequality holds because $m/(m-j) \leq \alpha_m/(\alpha_m-1)$, as mentioned above.

If $j> \lfloor m/\alpha_m\rfloor$, then $j\geq m/\alpha_m$ because $j$ is integral. In this case the
machine load is upper bounded by 
$$\beta(j)L^*+\pi_j = \alpha_m(\sum_{i=1}^n p_i - \sum_{i=1}^{2m}\hat{p}_t^i)/m +\pi_j
\leq \alpha_m(\sum_{i=1}^n p_i- j\pi_j)/m+\pi_j \ \leq \ \alpha_m L,$$
because $j\alpha_m\geq m$.

Finally we consider the jobs $R\setminus (P_1\cup\ldots \cup P_m)$. Each job of $R\setminus R'$
has a processing time of at most $(\alpha_m-1)L$. As argued above, each job of 
$R'\setminus (P_1\cup\ldots \cup P_m)$ has a processing time of at most ${\mathit OPT}/3$, which is
upper bounded by $(\alpha_m-1)\mathit{OPT}$ since $\alpha_m\geq 4/3$. Hence each job of
$R\setminus (P_1\cup\ldots \cup P_m)$ has a processing time of at most $(\alpha_m-1)\mathit{OPT}$.
Each of these jobs is scheduled on a least loaded machine and thus after the assignment 
the corresponding machine has a load of at most $\mathit{OPT}+ (\alpha_m-1)\mathit{OPT}
\leq \alpha_m \mathit{OPT}$.
\end{proof}
The proof of Theorem~\ref{th:1} is complete.

\section{A lower bound}\label{sec:2}
We present a lower bound showing that {\em ALG($\alpha_m$)} is optimal.
\begin{theorem}
Let $m\geq 2$. No deterministic online algorithm can achieve a competitive ratio smaller than
$\alpha_m$ if $o(n)$ job migrations are allowed.
\end{theorem}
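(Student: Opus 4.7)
My plan is to establish the lower bound via an adversary argument, matching the harmonic load profile that drives the equation $f_m(\alpha_m)=1$. Let $K(n)$ denote the migration budget, assumed to be $o(n)$. I would choose a large integer $N$ and construct a job sequence whose total length $n$ is $\Theta(mN)$, so that $K(n) = o(N)$. This is the basic trick that forces the online algorithm to commit to its placement of nearly all jobs within each phase: since at most $o(N)$ jobs can ever be migrated, in any block of $N$ identical jobs at least $N - o(N)$ of them remain on their originally assigned machines.

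The sequence itself would be organized into $m$ consecutive phases, where phase $j$ (for $j=1,\dots,m$) consists of $N$ identical jobs of size $s_j$. I would pick the sizes so that after phase $j$, the offline adversary's best schedule places the phase-$j$ jobs onto only $m-j+1$ machines (those machines that will not be targeted by later, larger jobs), giving those machines an additional load of $N s_j/(m-j+1)$, while leaving the remaining machines completely free for subsequent phases. Setting the optimal load on each machine to $1$ after the whole sequence, and matching the harmonic structure $\beta(j) = (\alpha_m-1)m/(m-j)$ in the algorithm's description, one obtains $s_j = (\alpha_m-1)/(N(m-j))$ for $j \leq \lfloor m/\alpha_m\rfloor$, together with a final batch of jobs of size $\alpha_m/m$ that cannot be split further. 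Summing the phase contributions recovers precisely $f_m(\alpha_m) = 1$, confirming that $\mathit{OPT} \leq 1 + o(1)$.

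The online algorithm, in contrast, when assigning phase-$j$ jobs does not know which $m-j+1$ machines the adversary will reserve for the later phases. I would argue that for any fixed assignment of phase-$j$ jobs to the $m$ machines, there is an averaging choice of "target machines" in the later phases that forces at least one reserved machine to inherit an $(m-j+1)/m$ fraction (up to $o(1)$ corrections from migrated jobs) of the load the online scheduler placed during phase $j$. Iterating this adversarial choice of reserved machines across phases, and then appending the final large job that certifies $\mathit{OPT} \geq \alpha_m/m \cdot m = \alpha_m$ (after normalization), yields a machine whose load is at least $\alpha_m - o(1)$ times $\mathit{OPT}$.

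The main obstacle, as I see it, will be making this ``adversarial choice of reserved machines'' rigorous: the adversary must pick its designated subset after seeing the algorithm's placement in each phase, yet cannot depend on migrations that occur later. I would handle this with a standard compactness/averaging argument: for every online placement of phase-$j$'s $N$ jobs there exists a choice of $j$ machines (to be ``retired'' from future phases) whose joint inherited load exceeds the average by at least the harmonic quota $(\alpha_m-1)/(m-j)$. Because migrations are bounded by $o(N)$ and each phase contains $N$ jobs, these retired loads cannot be substantially relieved by later migrations, and the total accumulated excess on the final most loaded machine matches precisely $\alpha_m$ once we combine all phase contributions. The remaining calculations are then routine verifications that the phase-wise averaging telescopes into the equation $f_m(\alpha_m)=1$ that defines $\alpha_m$.
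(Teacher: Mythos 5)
Your plan departs substantially from the paper's, and I believe it has genuine gaps that would be hard to fill. The paper's adversary is much simpler than yours: it is a two-step ``sand then stones'' argument. It first presents $n'$ tiny jobs of size $m/n'$ (total volume $m$, so $\mathit{OPT}\approx 1$). It then inspects the algorithm's schedule, sorts the machines by non-decreasing load, and invokes a single pigeonhole argument: because the total load is exactly $m$ and $f_m(\alpha_m)=1$, there must exist an index $j_0\leq m-1$ with $\ell(j_0)\geq \beta(j_0)=(\alpha_m-1)m/(m-j_0)$ (otherwise the total load would be strictly less than $mf_m(\alpha_m)=m$). The adversary then presents exactly $j_0$ equal ``stones'' of size $m/(m-j_0)$. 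One of them must collide with a machine of load at least $\beta(j_0)$, giving makespan $\geq \alpha_m m/(m-j_0) - o(1)$, while $\mathit{OPT}\leq m/(m-j_0)+o(1)$. The $o(n)$ migration budget is harmless because a migration can only remove a tiny job from the overloaded machine, reducing its load by at most $g(n'+m)\cdot m/n' = o(1)$.

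Your construction tries instead to build up the harmonic profile incrementally across $m$ phases with sizes $s_j$ and a retroactive choice of ``reserved'' machines. This telescoping is not verified and, as written, does not appear to work out: with $s_j=(\alpha_m-1)/(N(m-j))$ the per-machine offline load contributed by phase $j$ is $Ns_j/(m-j+1)=(\alpha_m-1)/((m-j)(m-j+1))$, and summing these gives $(\alpha_m-1)(1/(m-J)-1/m)$, which does not sum to~$1$ as $J\to m-1$; so $\mathit{OPT}$ is not normalized to $1$ and $f_m(\alpha_m)=1$ is not recovered. More fundamentally, the step ``there is an averaging choice of target machines that forces a reserved machine to inherit an $(m-j+1)/m$ fraction of the load'' is asserted, not proved, and it is unclear how it composes across phases while the online algorithm is adaptive (it can pile later jobs away from whichever machines you have already tried to overload). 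Finally, because your phases use jobs of different sizes that grow from $\Theta(1/N)$ to $\Theta(1)$, the ``migrations remove at most $o(N)$ jobs per phase'' accounting is not enough; you must separately bound the total processing volume removable from the eventually-overloaded machine, which the paper handles cleanly by making all small jobs the same size.

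The missing idea, in short, is that no incremental phased construction is needed: a single volume constraint after the sand phase forces a violation of the sorted profile $\beta(\cdot)$ at some index $j_0$, the number of stones is then chosen \emph{after} seeing that $j_0$, and the entire connection to $f_m(\alpha_m)=1$ is a one-line pigeonhole rather than a telescoping sum.
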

\begin{proof}
Let $A$ be any deterministic online algorithm that is allowed to use up to $g(n)$ job migrations on
a job sequence of length $n$. Suppose that $A$ achieves a competitive ratio smaller than $\alpha_m$.
We will derive a contradiction.

Choose an $\epsilon >0$ such that $A$ has a competitive ratio strictly smaller than $\alpha_m-\epsilon$. 
Let $\epsilon'=\epsilon/3$. Since $g(n)=o(n)$ there exists an $n_0$ such that $g(n)/n \leq \epsilon'/(2m)$, for all $n\geq n_0$. Hence there exists an $n_0$ such that $g(n+m)/(n+m) \leq \epsilon'/(2m)$, for all 
$n\geq \max\{m,n_0\}$. Let $n'$, with $n'\geq \max\{m,n_0\}$, be the smallest integer multiple of $m$.  We have
$g(n'+m)/n'\leq \epsilon'/m$ because $n'+m\leq 2n'$. An adversary constructs a job sequence consisting of
$n'+m$ jobs. Let $p_1= m/n'$. By our choice of $n'$, there holds $p_1\leq \epsilon'/g(n'+m)$.
The following adversarial sequence is similar to that used by Englert et al.~\cite{EOW}. However,
here we have to ensure that in migrating $o(n)$ jobs, an online algorithm cannot benefit much.

First the adversary presents $n'$ jobs of processing time $p_1$. We will refer to them as $p_1$-jobs.
If after the assignment of these jobs $A$ has a machine $M_j$, $1\leq j \leq m$, whose load is
at least $\alpha_m$, then the adversary presents $m$ jobs of processing time $p_2=\epsilon'/m$. 
Using job migration, $A$ can remove at most $g(n'+m)$ $p_1$-jobs from $M_j$. Since $g(n'+m)p_1\leq \epsilon'$,
after job migration $M_j$ still has a load of at least $\alpha_m-\epsilon$. On the other hand the
optimal makespan is $1+\epsilon'/m$. In an optimal assignment each machine contains $n'/m$ $p_1$-jobs
and one $p_2$-job. The ratio $(\alpha_m-\epsilon')/(1+\epsilon'/m)$ is at least $\alpha_m-\epsilon$ by
our choice of $\epsilon'$ and the fact that $\alpha_m\leq2$, see Lemma~\ref{lem:l1}. We obtain
a contradiction.  

In the following we study the case that after the assignment of the $p_1$-jobs each machine in $A$'s
schedule has a load strictly smaller than $\alpha_m$. We number the machines in order of non-decreasing 
load such that $\ell(1)\leq \ldots\leq \ell(m)$. Here $\ell(j)$ denotes the load of $M_j$ after the $p_1$-jobs 
have arrived, $1\leq j\leq m$. For $j=1, \ldots, m-1$, define $\beta(j) = (\alpha_m-1)m/(m-j)$. 
We first argue that there must exist a machine $M_j$, $1\leq j\leq m-1$, in $A$'s schedule whose load
is at least $\beta(j)$. Suppose that each machine $M_j$, $1\leq j\leq m-1$, had a load strictly smaller
than $\beta(j)$. By Lemma~\ref{lem:l1}, $\alpha_m>1$ and hence $\lceil(1-1/\alpha_m)m\rceil\geq 1$. 
Consider the $\lceil(1-1/\alpha_m)m\rceil$ machines with the highest load in $A$'s schedule. Each
of these machines has a load strictly smaller than $\alpha_m$. The remaining machines have a load 
strictly smaller than $\beta(j) = (\alpha_m-1)m/(m-j)$, for $j=1,\ldots, m - \lceil(1-1/\alpha_m)m\rceil$.
We conclude that after the arrival of the $p_1$-jobs the total load on the machines is strictly smaller
than
\begin{eqnarray*}
& & (\alpha_m-1)m \sum_{j=1}^{m-\lceil(1-1/\alpha_m)m\rceil} {1\over m-j}  + \lceil(1-1/\alpha_m)m\rceil \alpha_m\\
&=& m((\alpha_m-1)(H_{m-1}-H_{\lceil(1-1/\alpha_m)m\rceil-1}) + \lceil(1-1/\alpha_m)m\rceil \alpha_m/m) 
= mf_m(\alpha_m) = m.
\end{eqnarray*}
The last equation holds because $f_m(\alpha_m)=1$, by the choice of $\alpha_m$. We obtain a contradiction
to the fact that after the arrival of the $p_1$-jobs a total load of exactly $m$ resides on the machines.

Let $M_{j_0}$, with $1\leq j_0\leq m-1$, be a machine whose load is at least $\beta(j_0)$. Since $A$'s machines
are numbered in order of non-decreasing load there exist at most $j_0-1$ machines having a smaller load 
than $\beta(j_0)$.
The adversary presents $j_0$ jobs of processing time $p_2=m/(m-j_0)$. Using job migration $A$ can remove
at most $g(n'+m)$ $p_1$-jobs from any of the machines, thereby reducing the load by at most $\epsilon'$. Hence in 
$A$'s final schedule there exists a machine having a load of a least $\beta(j_0)+m/(m-j_0)-\epsilon'$. 
This holds true if the $p_2$-jobs reside on different machines. If there exists a machine containing
two $p_2$-jobs, then its load is at least $2m/(m-j_0)\geq (\alpha_m-1)m/(m-j_0) + m/(m-j_0) =
\beta(j_0) + m/(m-j_0)$ as desired. The inequality holds because $\alpha_m\leq 2$, by 
Lemma~\ref{lem:l1}. Hence $A$'s makespan is at least $\beta(j_0)+m/(m-j_0)-\epsilon'$.

The optimum makespan for the job sequence is upper bounded by $m/(m-j_0)+\epsilon'$. In an optimal 
schedule the $j_0$ $p_2$-jobs are assigned to different machines. The $n'$ $p_1$-jobs are distributed
evenly among the remaining $m-j_0$ machines. If $n'$ is an integer multiple of $m-j_0$, then the load
on each of these $m-j_0$ machines is exactly $n'p_1/(m-j_0)= m/(m-j_0)$, which is exactly equal to
the processing time of a $p_2$-job. If $n'$ is not divisible by $m-j_0$, then the maximum load on
any of these $m-j_0$ machines cannot be higher than $m/(m-j_0)+p_1 \leq m/(m-j_0) + \epsilon'/g(n'+m)
\leq m/(m-j_0) + \epsilon'$.

Dividing the lower bound on $A$'s makespan by the upper bound on the optimum makespan we obtain
$(\alpha_m m/(m-j_0) - \epsilon')/(m/(m-j_0) + \epsilon') \geq (\alpha_m-\epsilon')/(1+\epsilon')\geq  
\alpha_m-\epsilon$.
The last inequality holds because $\epsilon'=\epsilon/3$ and $\alpha_m\leq2$, see Lemma~\ref{lem:l1}.
We obtain a contradiction to the assumption that $A$'s competitiveness is strictly smaller than 
$\alpha_m-\epsilon$.
\end{proof}

\section{Algorithms using fewer migrations}\label{sec:3}

We present a family of algorithms {\em ALG$(c)$\/} that uses a smaller number of job migrations. 
We first describe the family and then analyze its performance.

\subsection{Description of {\em ALG$(c)$}}
{\em ALG$(c)$\/} is defined for any constant $c$ with $5/3\leq c \leq 2$, where $c$ is the targeted competitive ratio.
An important feature of {\em ALG$(c)$\/} is that it partitions the machines 
$M_1, \ldots, M_m$ into two sets $A=\{M_1, \ldots, M_{\lfloor m/2\rfloor}\}$ and 
$B=\{M_{\lceil m/2\rceil}, \ldots, M_m\}$ of roughly equal size. In a job arrival phase the jobs 
are preferably assigned to machines in $A$, provided that their load it not too high. In the
job migration phase, jobs are mostly migrated from machines of $A$ (preferably to machines in
$B$) and this policy will allow us to achieve a smaller number of migrations.
Setting $c=5/3$ we obtain an algorithm {\em ALG$(5/3)$\/} that is $5/3$-competitive using $4m$ migrations. 
For $c=1.75$ the resulting algorithm {\em ALG$(1.75)$\/} is $1.75$-competitive and uses at most $2.5m$ 
migrations. In the following let $5/3\leq c \leq 2$.

\vspace*{0.1cm}

{\bf Algorithm ALG$(c)$:} 
{\bf Job arrival phase.} At any time $t$ {\em ALG$(c)$\/} maintains a lower bound $L_t$ on the 
optimum makespan, which is defined as 
$\textstyle{L_t =  \max\{{1\over m}p_t^+,p_t^1,2p_t^{m+1}\}}.$
Here we use the same notation as in Section~\ref{sec:a1}. Recall that $p_t^1$ and $p_t^{m+1}$ are
the processing times of the largest and $(m+1)$-st largest jobs in $J_1, \ldots, J_t$, respectively. 
A job $J_t$ is {\em small} if $p_t\leq (2c-3)L_t$; otherwise it is {\em large}. A job $J_i$,
with $i\leq t$, is {\em small at time $t$\/} if  $p_i\leq (2c-3)L_t$. For any machine $M_j$ and any time
$t$, $\ell(j,t)$ is $M_j$'s load immediately before $J_t$ is assigned and $\ell_s(j,t)$ is its
load consisting of the jobs that are small at time $t$.

Any job $J_t$, $1\leq t\leq n$, is processed as follows. If $J_t$ is small, then {\em ALG$(c)$\/}
checks if there is a machine in $A$ whose load value $\ell_s(j,t)$ is at most $(c-1)L_t$. 
If this is the case, then among the machines in $A$ with this property, $J_t$ is assigned to one
having the smallest $\ell_s(j,t)$ value. If there is no such machine in $A$, then $J_t$ is
assigned to a least loaded machine in $B$. If $J_t$ is large, then {\em ALG$(c)$\/} checks if there
is machine in $A$ whose load value $\ell(j,t)$ is at most $(3-c)L_t$. If this is the case,
then $J_t$ is scheduled on a least loaded machine in $A$. Otherwise $J_t$ is assigned to a least
loaded machine in $B$. At the end of the phase let $L= L_n$.

{\bf Job migration phase.} At any time during the phase let $\ell(j)$ denote the current load of
$M_j$, $1\leq j\leq m$. We first describe the job removal step. For any machine $M_j\in B$, 
{\em ALG$(c)$\/} removes the largest job from that machine. Furthermore, while there exists
a machine $M_j\in A$ whose current load exceeds $(c-1)L$, {\em ALG$(c)$\/} removes the
largest job from the machine. Let $R$ be the set of all removed jobs. In the job reassignment
step {\em ALG$(c)$\/} first sorts the jobs in order of non-increasing processing times. For
any $i$, $1\leq i \leq |R|$, let $J_r^i$ be the $i$-th largest job in this sequence, and let
$p_r^i$ be the corresponding processing time. For $i=1, \ldots, |R|$, $J_r^i$ is scheduled
as follows. If there exists a machine $M_j\in B$ such that $\ell(j)+p_r^i\leq cL$, i.e.\
$J_r^i$ can be placed on $M_j$ without exceeding a makespan of $cL$, then $J_r^i$ 
is assigned to this machine. Otherwise the job is scheduled on a least loaded machine in $A$.
A pseudo-code description of {\em ALG$(c)$\/} is given in Figure~\ref{fig:2}.

\begin{figure}[ht] 
{\bf Algorithm ALG$(c)$:} Let $5/3\leq c \leq 2$.\\[2pt]
{\em Job arrival phase.} Each $J_t$, $1\leq t \leq n$, is scheduled as follows.\\[-21pt]
\begin{itemize}
\item $J_t$ is small: Let $A'=\{M_j\in A \mid \ell_s(j,t)\leq (c-1)L_t\}$. If $A'\neq \emptyset$, then
assign $J_t$ to a machine $M_j\in A'$ having the smallest $\ell_s(j,t)$ value. Otherwise assign $J_t$ to
a least loaded machine $M_j\in B$.\\[-20pt]
\item $J_t$ is large: If there is an $M_j\in A$ with $\ell(j,t)\leq (3-c)L_t$, then assign $J_t$ to a least
loaded machine in $A$. Otherwise assign $J_t$ to a least loaded machine in $B$.\\[-18pt]
\end{itemize}
{\em Job migration phase.}\\[-21pt]
\begin{itemize}
\item Job removal: Set $R:=\emptyset$. For any $M_j\in B$, remove the largest job from $M_j$ and add
it to $R$. While there exists an $M_j\in A$ with 
$\ell(j)>(c-1)L$, remove the largest job from $M_j$ and add it to $R$.\\[-20pt]
\item Job reassignment: Sort the jobs of $R$ in order of non-increasing processing time. 
For $i=1, \ldots, |R|$, schedule $J_r^i$ as follows. If there is an 
$M_j\in B$ with $\ell(j) + p_r^i\leq cL$, then assign $J_r^i$ to $M_j$. Otherwise assign
it to a least loaded machine in $A$.\\[-20pt]
\end{itemize}
\caption{The algorithm {\em ALG$(c)$}.}\label{fig:2}\vspace*{-0.1cm}
\end{figure}
\begin{theorem}\label{th:family}
ALG$(c)$ is $c$-competitive, for any constant $c$ with $5/3\leq c\leq 2$. 
\end{theorem}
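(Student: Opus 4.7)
The plan is to show first that $L := L_n \leq \mathit{OPT}$ and then that every machine has final load at most $cL$. The former is immediate, since each of $p_n^+/m$, $p_n^1$, and $2 p_n^{m+1}$ is a classical lower bound on $\mathit{OPT}$. Given the second fact, the competitive ratio is at most $c$. I would organize the load bound machine-class by machine-class and then count migrations as a separate accounting step.

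For machines in $B$, the bound is immediate from the algorithm: the reassignment rule places $J_r^i$ on $M_j \in B$ only if $\ell(j) + p_r^i \leq cL$. For machines in $A$, I would first observe that after the removal step each $M_j \in A$ has load at most $(c-1)L$. Then, for every job $J_r^i$ that the reassignment step places on an $A$-machine $M_a$ (chosen as the currently least loaded in $A$), I would bound $\ell(a) + p_r^i$. Since $J_r^i$ failed to fit in any $M_k \in B$, every $M_k \in B$ currently satisfies $\ell(k) > cL - p_r^i$; combining this with the global bound $p_n^+ \leq mL$ gives an upper bound on the total load on $A$ and hence on $\ell(a)$. The analysis splits according to whether $p_r^i \leq (2c-3)L$ (\emph{small}), in which case every $B$-machine exceeds $(3-c)L$ and elementary arithmetic for $c \in [5/3, 2]$ yields $\ell(a) + p_r^i < (3c-4)L \leq cL$, or $p_r^i > (2c-3)L$ (\emph{large}), in which case I would use that reassignment is performed in non-increasing order (so large jobs are placed first while $A$-loads are still near $(c-1)L$), together with $p_n^1 \leq L$ and $p_n^{m+1} \leq L/2$, to control how many large jobs can simultaneously compete for $A$-machines.

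The migration count is then handled by bounding $|R|$: exactly $\lceil m/2 \rceil$ jobs come from $B$, and the number of removals from any $M_j \in A$ is controlled by noting that every removed job is comparatively large relative to the excess $\ell(j) - (c-1)L$, so only boundedly many removals suffice. Substituting $c = 5/3$ and $c = 1.75$ into the resulting expressions gives the claimed $4m$ and $2.5m$ migration bounds. I expect the large-job sub-case of the $A$-machine load analysis to be the main obstacle: the thresholds $(c-1)L$ and $(3-c)L$ are coarser than the Harmonic profile used in Section~\ref{sec:a1}, and the reassignment step does not invoke Graham-style pairing, so one must track both the order in which large jobs are reassigned and the saturation levels of $B$-machines in order to guarantee $\ell(a) + p_r^i \leq cL$ after each individual placement.
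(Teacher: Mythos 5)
Your opening observations are fine: $L \leq \mathit{OPT}$ is immediate, the bound on $B$-machines follows directly from the reassignment rule, and after the removal step every $A$-machine has load at most $(c-1)L$. Your counting argument for the case $p_r^i \leq (2c-3)L$ also goes through: since every $B$-machine then exceeds $(3-c)L$, the total-load constraint $p_n^+ \leq mL$ forces the least loaded $A$-machine below $(c-1)L$, and $(c-1)L+(2c-3)L = (3c-4)L \leq cL$ for $c\leq 2$.

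The gap is exactly where you flag it, and it is not a minor obstacle — your counting argument provably fails there. If $J_r^i$ has $p_r^i$ close to $L$ and cannot be placed in $B$, your argument only gives $\ell(k) > cL - p_r^i \approx (c-1)L$ for each $M_k\in B$, and hence $\ell(a) \lesssim 2L - (c-1)L = (3-c)L$ for the least loaded $A$-machine; then $\ell(a)+p_r^i \lesssim (4-c)L$, which exceeds $cL$ for all $c<2$. Using $p_n^{m+1}\leq L/2$ to limit how many such jobs exist does not fix this: even one such placement breaks the bound, and nothing in the order-of-placement alone prevents it. The paper avoids this by \emph{not} classifying reassigned jobs by size. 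Instead it branches on a structural property of the $A$-machines at the end of the arrival phase: whether $\ell_s(j,n+1)\geq (2-c)L$ holds for all $M_j\in A$. If it does (Lemma~\ref{lem:2rem1}), one shows the removal step preserves the floor $\ell_s(j)\geq (2-c)L$, so an $A$-machine that cannot absorb $J_k$ must have already received some reassigned job $J_{i_j}$ and therefore has load $\geq (2-c)L + p_{i_j}$; pairing this \emph{lower} bound with the $B$-machine lower bound $\ell(\lfloor m/2\rfloor+j)+p_{i_j}>cL$ and summing over all machines yields a total strictly above $mL$, a contradiction. If instead some $M_{j^*}\in A$ has $\ell_s(j^*,n+1)<(2-c)L$ (Lemma~\ref{lem:2rem2}), then Lemma~\ref{lem:2r2} shows every $B$-machine has load at most $(c-1)L$ after its largest job is removed, so \emph{all} machines start reassignment at load $\leq (c-1)L$; the first $m$ reassigned jobs fit trivially since each has $p\leq L$, and every later job has $p\leq L/2$ and fits on a least loaded machine. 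Neither branch needs the small/large split of reassigned jobs that your argument relies on, and the first branch needs a lower bound on $A$-loads (the retained small-job floor), not the upper bound your counting produces. To repair your proof you would have to import that case split and the floor argument; without them, the large-job sub-case cannot be closed.
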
 
The proof of the above theorem is presented in Section~\ref{sec:family21}.
In order to obtain good upper bounds on the number of job migrations, we focus on specific values of
$c$. First, set $c=5/3$. In {\em ALG(5/3)} a job $J_t$ is small if $p_t \leq 1/3\cdot L_t$. In the arrival
phase a small job is assigned to a machine in $A$ if there exists a machine in this set whose load
consisting of jobs that are currently small is at most $2/3\cdot L_t$. A large job is assigned
to a machine in $A$ if there exists a machine in this set whose load is at most $4/3 L_t$. 
\begin{theorem}\label{th:3}
ALG(5/3) is ${5\over 3}$-competitive and uses at most $4m$ job migrations.
\end{theorem}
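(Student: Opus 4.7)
The competitive ratio of $5/3$ is inherited directly from Theorem~\ref{th:family} by setting $c = 5/3$, so the entire proof reduces to establishing the migration bound of $4m$. I would split the total count by the origin of the removed job: the removal step takes exactly one job from each of the $\lceil m/2\rceil$ machines in $B$, so it suffices to show that at most $7$ jobs are removed from each of the $\lfloor m/2\rfloor$ machines in $A$, which gives $7\lfloor m/2\rfloor + \lceil m/2\rceil \le 4m$ (verified separately for even and odd $m$).

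To bound the per-machine count on the $A$-side, I would mimic the three-set case analysis from the proof of Lemma~\ref{lem:r1}. Fix $M_j \in A$ whose load at the start of the removal step exceeds $(c-1)L = 2L/3$, let $t^*$ be the earliest time with $\ell_s(j,t) > 2L_t/3$ for all $t \in [t^*, n]$, and partition the jobs on $M_j$ at time $n+1$ into $T_1$ (placed by $t^*-1$ and small at $t^*-1$), $T_2$ (placed by $t^*-1$ and large at $t^*-1$), and $T_3$ (placed at time $\ge t^*$). Since ALG$(5/3)$ places a small job on a machine in $A$ only when $\ell_s(j,t) \le 2L_t/3$, no small job enters $M_j$ after $t^*$, so every job in $T_3$ was large at its assignment time. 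The $T_1$-analogue of the processing-time bound in the proof of Lemma~\ref{lem:r1} yields $\sum_{J_i \in T_1 \setminus \{J_l\}} p_i \le (c-1)L_{t^*-1} \le 2L/3$, where $J_l$ is the last job of $T_1$ placed on $M_j$.

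Next I would show $|T_2|$ and $|T_3|$ are each bounded by a small constant. A job large at time $t$ has $p > (2c-3)L_t = L_t/3$, while the placement rule caps the pre-assignment load $\ell(j,t)$ at $(3-c)L_t = 4L_t/3$ for large jobs entering an $M_j \in A$. A counting argument parallel to the one bounding simultaneously-large jobs per machine in the proof of Lemma~\ref{lem:r1} then bounds $|T_2|$; a similar accounting over $[t^*, n]$, coupled with a case distinction analogous to the one handling shrinking-large jobs in that proof, bounds $|T_3|$. Since the removal step always removes the currently largest job, deleting $T_2 \cup T_3 \cup \{J_l\}$ brings $M_j$'s load to $\le 2L/3$, so no more than $|T_2| + |T_3| + 1 \le 7$ jobs are actually removed from $M_j$.

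The main obstacle will be pinning down the constants on $|T_2|$ and $|T_3|$ so that their sum is at most $6$. The placement-cap slack $(3-c)L_t - (2c-3)L_t = L_t$ is wide enough to accommodate several large jobs at once, and the shrinking-large phenomenon can inflate $|T_3|$ when $L_t$ grows substantially between $t^*$ and $n$, so a careful case distinction between persistent large jobs (those whose size remains above a constant fraction of $L_n$) and shrinking ones, exploiting the total-processing-time bound from the $(3-c)L_t$ placement cap rather than a per-job size bound, will be the technical core.
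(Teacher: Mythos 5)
Your high-level plan mirrors the paper exactly: competitiveness is inherited from Theorem~\ref{th:family}, the count splits as $7\lfloor m/2\rfloor + \lceil m/2\rceil \le 4m$ (one removal per $B$-machine, at most seven per $A$-machine), and the per-machine bound on $A$ uses the $T_1, T_2, T_3$ decomposition at time $t^*$ carried over from Lemma~\ref{lem:r1}. Your observations that no small job enters $M_j$ after $t^*$, that $\sum_{J_i\in T_1\setminus\{J_l\}}p_i\le 2L/3$, and that $|T_2|\le 4$ via the $(3-c)L_t$ placement cap all match the paper's Lemma~\ref{lem:2r3}.

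The concluding step, however, has a genuine gap. You claim that deleting $T_2\cup T_3\cup\{J_l\}$ brings the load below $2L/3$ and that $|T_2|+|T_3|+1\le 7$, which requires $|T_3|$ to be bounded by a small constant. It is not. As $L_t$ grows over $[t^*,n]$, ALG$(5/3)$ can keep placing on $M_j\in A$ jobs that are large at their assignment time $t$ (so $p>L_t/3$) yet have processing time far below $L/6$ at time $n+1$; the number of such shrinking-large jobs in $T_3$ can scale like $\log(L_n/L_{t^*})$ and is not bounded by any constant. The paper's Lemma~\ref{lem:2r3} never bounds $|T_3|$; it bounds the number of jobs that \emph{need to be removed}, which is a different quantity. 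The decisive reset observation is: if some $J_{l'}\in T_3$ has $p_{l'}<L/6$, then, being large at time $l'$, $p_{l'}>L_{l'}/3$ forces $L_{l'}<L/2$, so $M_j$'s load immediately before $J_{l'}$ was assigned is at most $(3-c)L_{l'}=4L_{l'}/3<2L/3$ --- already below the target $(c-1)L$. Hence it suffices to remove $J_{l'}$ together with only the jobs of $T_3$ placed afterward, and those (apart from the last one) each have processing time $\ge L/6$, so there are at most three of them. A parallel argument handles a shrinking job in $T_2$. The worst sub-case (a shrinking job in $T_2$ but none in $T_3\setminus\{J_l\}$) forces $|T_3|\le 4$ and $|T_2\setminus\{J_{l'}\}|\le 3$, giving the bound of seven removals. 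So you have identified the right phenomenon, but the object to bound is the size of the removed subset, not $|T_3|$ itself; trying to bound $|T_3|$ will not succeed.
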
 
In fact, for any $c$ with $5/3\leq c \leq 2$, {\em ALG$(c)$} uses at most $4m$ job migrations. 
Finally, let $c=1.75$. In {\em ALG(1.75)} a job $J_t$ is small if $p_t \leq 0.5\cdot L_t$. In the arrival
phase a small job is assigned to a machine in $A$ if there is a machine in this set whose load
consisting of jobs that are currently small is no more than $0.75 L_t$. A large job is assigned
to a machine in $A$ if there exists a machine in this set whose load is at most $1.25 L_t$. 
\begin{theorem}\label{th:4}
ALG(1.75) is $1.75$-competitive and uses at most $2.5m$ job migrations.
\end{theorem}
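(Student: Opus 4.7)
The plan is to read off the competitive ratio from Theorem~\ref{th:family} applied at $c = 1.75$, and to verify the migration bound by a direct analysis of the removal step. With $c = 1.75$ the algorithm's three key thresholds become: a job $J_t$ is small iff $p_t \leq 0.5\, L_t$, small jobs are placed on $A$ only when some machine there satisfies $\ell_s(\cdot,t) \leq 0.75\, L_t$, and large jobs are placed on $A$ only when some machine there satisfies $\ell(\cdot,t) \leq 1.25\, L_t$. The removal step contributes exactly one job per machine in $B$, giving $\lceil m/2\rceil$ migrations from $B$. The main task is to prove that the while-loop on $A$ removes at most $4$ jobs from each $M_j \in A$; combining the two contributions yields at most $\lceil m/2 \rceil + 4 \lfloor m/2 \rfloor \leq 2.5\, m$ migrations, as claimed.

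For the per-machine bound on $A$, I would fix $M_j \in A$ whose load at the end of the arrival phase exceeds $(c-1)L = 0.75\, L$, and call a job \emph{large-at-end} if its processing time is strictly greater than $0.5\, L$ and \emph{small-at-end} otherwise. The first observation is that any large-at-end job on $M_j$ must have been placed there as a large job, since at placement time $L_t \leq L$ and a small placement would enforce $p \leq 0.5\, L_t \leq 0.5\, L$. The second observation is that at most three large-at-end jobs can reside on $M_j$: if $J_{i_k}$ is the last such job placed on $M_j$, the large-placement rule forces $\ell(j, t_{i_k}) \leq 1.25\, L_{t_{i_k}} \leq 1.25\, L$, while the $k-1$ earlier large-at-end jobs each contribute more than $0.5\, L$ to this load, forcing $(k-1)\cdot 0.5 < 1.25$ and hence $k \leq 3$. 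These at most three jobs are the largest currently on $M_j$ and are exactly the first jobs removed by the while-loop.

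Once the large-at-end jobs are gone the residual load on $M_j$ consists of small-at-end jobs only, each of size at most $0.5\, L$. I would then show that this residual load is at most $1.25\, L$, so that one additional removal of a job of size at most $0.5\, L$ drops the load below $0.75\, L$ and completes the bound of $4$ removals per $A$-machine. The bound on the residual load is obtained by a Lemma~\ref{lem:l3}-style argument at the time $s$ of the last placement on $M_j$ of a job that is small at that time: the placement condition $\ell_s(j, s) \leq 0.75\, L_s \leq 0.75\, L$ controls all jobs that are small at time $s$ and reside on $M_j$, which already includes the placed-as-small jobs and also the shrinking jobs (placed-as-large on $M_j$ whose processing time is small at time $s$); adding the last small-placed job contributes at most $0.5\, L$ more. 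In the degenerate case where no small placement on $M_j$ occurs after all shrinking has taken place, the large-placement condition $\ell(j,\cdot) \leq 1.25\, L$ applied at the last large placement on $M_j$ plays the analogous role.

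The main obstacle will be the bookkeeping for shrinking jobs in the residual-load bound: a job placed as large on $M_j$ at an early time does not satisfy the small-placement condition at its own time of placement, and only becomes controlled by $\ell_s$ once it has shrunk below the current threshold. Verifying that every such shrinking job is picked up by an $\ell_s$-bound at some later small placement (or, absent that, by the large-placement bound at the last large placement) is the subtle point. Once this bookkeeping is in place, the proof reduces to the arithmetic above with the instantiated thresholds $0.5$, $0.75$, and $1.25$.
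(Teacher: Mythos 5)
Your framework is right---the competitive ratio is Theorem~\ref{th:family} at $c=1.75$, the migration bound splits into $\lceil m/2\rceil$ removals from $B$ plus at most four per machine in $A$, and your count of at most three large-at-end jobs on each $A$-machine (via $\ell(j,t)\leq 1.25L_t$ at the last large-at-end placement) is sound. The gap is in the finishing step. You write ``the residual load is at most $1.25L$, so that one additional removal of a job of size at most $0.5L$ drops the load below $0.75L$,'' but this arithmetic runs backwards: removing a job of size \emph{at most} $0.5L$ from a load of up to $1.25L$ can leave essentially all of it behind---if the largest remaining small-at-end job has size $0.1L$, the machine still carries $1.15L$ and the loop continues. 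What you need is a \emph{lower} bound on the sizes of removed jobs, which your outline does not supply. Separately, the residual bound $\leq 1.25L$ is itself unsecured once there are shrinking jobs placed as large on $M_j$ after the last small placement: those jobs contribute to the residual yet are controlled neither by $\ell_s(j,\cdot)\leq 0.75L_t$ at the last small placement nor by a single $\ell(j,t)\leq 1.25L_t$ bound, which controls the load only \emph{before} the last such placement.

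The paper's Lemma~\ref{lem:3r3} handles both issues by a different decomposition together with an explicit size threshold. It partitions the jobs on $M_j$ into $T_1,T_2,T_3$ according to the earliest time $t^*$ after which $\ell_s(j,t)>0.75L_t$ always holds; beyond $t^*$ only jobs that are large at their placement time land on $M_j$, each arriving to a load at most $1.25L_t\leq 1.25L$. The crucial ingredient your outline is missing is the threshold $L/6$: three jobs of size at least $L/6$ each sum to at least $0.5L$, so removing them plus the last-placed job drops the load from $\leq 1.25L$ to $\leq 0.75L$ (Claim~\ref{lem:3r3}.\ref{c:3r35}); conversely, a job in $T_3$ of size below $L/6$ was large at its placement time, which forces $L$ then to be below $L/3$ and already caps the load at that moment by $0.75L$ (Claim~\ref{lem:3r3}.\ref{c:3r36}), with an analogous bound for tiny jobs in $T_2$ (Claim~\ref{lem:3r3}.\ref{c:3r37}). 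A case analysis on $|T_3|$ then always lands on one of these claims. The shrinking-jobs bookkeeping you flagged is exactly where your plan stalls, and the quantitative size threshold is what makes the four-removal bound go through.
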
 
Again, for any $c$ with $1.75\leq c \leq 2$, {\em ALG$(c)$} uses at most $2.5m$ job migrations. The proofs of 
Theorems~\ref{th:3} and \ref{th:4} are contained in Section~\ref{sec:family22}.  

\subsection{Analysis of {\em ALG$(c)$}}

In this section we analyze {\em ALG$(c)$}, for any $c$ with $5/3\leq c \leq 2$, and prove
Theorems~\ref{th:family}, \ref{th:3} and \ref{th:3}. We first determine the competitive
ratio of {\em ALG$(c)$} and then bound the number of job migrations performed for
$c=5/3$ and $c=1.75$.

\subsubsection{Analysis of the competitive ratio}\label{sec:family21}

We start by showing two lemmas that will allow us to bound load on machines in $B$. 
Again, let time $n+1$ be the time when the entire job sequence  $\sigma=J_1, \ldots, J_n$ has
been scheduled and the migration phase starts. A job $J_i$, $1\leq i\leq n$, is 
{\em small at time $n+1$} if 
$p_i\leq (2c-3)L= (2c-3)L_n$; otherwise the job is {\em large at time $n+1$}. For any $M_j$, $1\leq j\leq m$,
let $\ell(j,n+1)$ be its load at time $n+1$ and let $\ell_s(j,n+1)$ be the load consisting of the jobs
that are small at time $n+1$. Let $L_{n+1}:=L$. 
\begin{lemma}\label{lem:2r1}
For any time $t$, $1\leq t\leq n+1$, and any $M_j\in B$, there holds $\ell(j,t)-p_l\leq (3-c)L_{t-1}$, 
where $J_l$ with $l<t$ is the last job assigned to $M_j$. 
\end{lemma}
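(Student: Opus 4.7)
The plan is to reduce the claim to the simpler inequality $\ell(j,l) \leq (3-c)L_l$. This reduction is essentially free: since $J_l$ is the last job placed on $M_j$ with $l < t$, no jobs land on $M_j$ during times $l+1, \ldots, t-1$, so $\ell(j,t) = \ell(j,l) + p_l$, i.e.\ $\ell(j,t) - p_l = \ell(j,l)$. Combined with the monotonicity $L_l \leq L_{t-1}$ of the lower-bound sequence, it is enough to prove $\ell(j,l) \leq (3-c)L_l$.

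The core of the argument will be a short averaging step, split according to the type of $J_l$ at time $l$. In both subcases $J_l$ was routed to $B$ only because every machine in $A$ was already heavily loaded: if $J_l$ is large, the assignment rule of {\em ALG$(c)$} gives $\ell(j',l) > (3-c)L_l$ for every $M_{j'}\in A$; if $J_l$ is small, it gives $\ell_s(j',l) > (c-1)L_l$, hence $\ell(j',l) \geq \ell_s(j',l) > (c-1)L_l$, for every $M_{j'}\in A$. I would then bound the total load on $B$ by $p_{l-1}^+ - (\text{load on } A) \leq mL_l - \gamma L_l\lfloor m/2\rfloor$, with $\gamma = 3-c$ in the large case and $\gamma = c-1$ in the small case, and use that $M_j$ is the least loaded machine among the $\lceil m/2\rceil$ machines in $B$ to obtain
$$\ell(j,l) \leq \frac{mL_l - \gamma L_l \lfloor m/2\rfloor}{\lceil m/2\rceil}.$$

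The hard part, to the extent there is one, is the endgame arithmetic, which depends on the parity of $m$. For even $m$ the right-hand side collapses to $(c-1)L_l$ in the large-job case and to exactly $(3-c)L_l$ in the small-job case; both are $\leq (3-c)L_l$ since $c\leq 2$. For odd $m$ the asymmetry $|A| = (m-1)/2 < (m+1)/2 = |B|$ introduces an $O(1/m)$ correction, but the resulting inequality simplifies in both cases to $c-1\leq 3-c$, which is exactly the hypothesis $c\leq 2$. Verifying this parity check completes the proof.
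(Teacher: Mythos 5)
Your argument is correct, and the underlying idea — all machines of $A$ are heavily loaded when $J_l$ is routed to $B$, so an averaging bound on $B$ forces $\ell(j,l)\leq(3-c)L_l$, then $\ell(j,t)-p_l=\ell(j,l)$ — is exactly the one the paper uses. The only difference is cosmetic: the paper avoids your case split on whether $J_l$ is small or large by observing that in either case every $M_{j'}\in A$ has $\ell(j',l)>(c-1)L_l$ (for large $J_l$ because $(3-c)\geq(c-1)$, for small $J_l$ because $\ell(j',l)\geq\ell_s(j',l)$), then runs the contradiction $\lfloor m/2\rfloor(c-1)L_l+\lceil m/2\rceil(3-c)L_l\geq mL_l\geq p_l^+$ directly. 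Your version, which keeps the tighter $A$-bound $(3-c)L_l$ in the large-job case and checks parity, gets there too but at the cost of carrying two cases through the arithmetic; note also that your final inequality in the large-job odd-$m$ case is $1\leq 3-c$, not $c-1\leq 3-c$ as you state — both are equivalent to $c\leq 2$, so nothing breaks, but it is a small misstatement. The paper's single-case contradiction is marginally cleaner, but both are sound.
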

\begin{proof}
By the definition of {\em ALG$(c)$}, when $J_l$ is assigned to $M_j$, all machines of $A$ have a load greater 
than $(c-1)L_l$ and $M_j$ is a least loaded machine in $B$. Hence $M_j$'s load at time $l$ is at most $(3-c)L_l$ 
since otherwise the  total load on the $m$ machines would be greater than 
$\lfloor m/2\rfloor (c-1) L_l+\lceil m/2\rceil (3-c) L_l
\geq mL_l\geq \sum_{i=1}^l p_i$, which is a contradiction. Hence $\ell(j,t) = \ell(j,l)+p_l \leq (3-c)L_l+p_l
\leq (3-c)L_{t-1}+p_l$.
\end{proof}

\begin{lemma}\label{lem:2r2}
Suppose that there exists a machine $M_{j^*}\in A$ with $\ell_s(j^*,n+1)< (2-c)L$. Then, for any
$M_j \in B$, $\ell(j,n+1)-p_l \leq (c-1)L$, where $J_l$ is the last job assigned to $M_j$. 
\end{lemma}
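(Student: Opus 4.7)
The plan is to bound $\ell(j, l)$, noting that $\ell(j, n+1) - p_l = \ell(j, l)$ since $J_l$ is the last job placed on $M_j \in B$ during the arrival phase. The preliminary observation that transports the hypothesis back to time $l$ is that any job small at time $l$ has $p_i \leq (2c-3)L_l \leq (2c-3)L$, so it remains small at time $n+1$ and is still residing on $M_{j^*}$ at time $n+1$. Hence $\ell_s(j^*, l) \leq \ell_s(j^*, n+1) < (2-c)L$.

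I would then split into two cases based on whether $J_l$ is small or large at time $l$. If $J_l$ is small, the placement rule forces $\ell_s(M, l) > (c-1)L_l$ for every $M \in A$, so $(c-1)L_l < \ell_s(j^*, l) < (2-c)L$ gives $L_l < (2-c)L/(c-1)$. Next, the averaging argument from Lemma~\ref{lem:2r1} yields $\ell(j, l) \leq (3-c)L_l \leq (3-c)(2-c)L/(c-1)$, which is at most $(c-1)L$ iff $(3-c)(2-c) \leq (c-1)^2$, i.e.\ $3c \geq 5$. This is exactly the assumption $c \geq 5/3$ and is the single place in the analysis that pins the family's lower threshold.

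If $J_l$ is large, the placement rule gives $\ell(M, l) > (3-c)L_l$ for every $M \in A$, and $M_j$ remains least loaded in $B$. Assume $\ell(j, l) > (c-1)L$ for contradiction. Then $\ell(M', l) \geq \ell(j, l) > (c-1)L$ for every $M' \in B$, and combining these lower bounds with the total-load inequality $\sum_M \ell(M, l) \leq mL_l$ yields $|A|(3-c)L_l + |B|(c-1)L < mL_l$. For even $m$, where $|A|=|B|=m/2$, this simplifies to $L < L_l$, contradicting $L_l \leq L$. For odd $m$ the parity leaves a deficit, which I would close by splitting on $L_l$: if $L_l \leq (c-1)L/(3-c)$, then Lemma~\ref{lem:2r1} alone already gives $\ell(j, l) \leq (3-c)L_l \leq (c-1)L$; otherwise $\ell_s(j^*, l) < (2-c)L$ combined with $\ell(j^*, l) > (3-c)L_l$ yields the strict inequality that the large-at-$l$ part of $\ell(j^*, l)$ exceeds $(3-c)L_l - (2-c)L > 0$, which I would plug back into the $M_{j^*}$ contribution of $\sum_{M \in A}\ell(M, l)$ to replace the trivial lower bound and recover the contradiction.

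The main obstacle is the odd-$m$ subcase of the large case: plain averaging is short by a parity-sized amount, so the hypothesis must be threaded in carefully to boost the contribution of $M_{j^*}$ to $\sum_{M \in A}\ell(M, l)$. In contrast the small case is clean but tight at $c=5/3$, where $(3-c)(2-c) \leq (c-1)^2$ becomes an equality, explaining exactly why the family of algorithms cannot be extended below $5/3$ with this analysis.
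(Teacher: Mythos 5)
Your small-at-$l$ case is exactly the paper's argument: propagating $\ell_s(j^*,l)\leq\ell_s(j^*,n+1)<(2-c)L$ to time $l$, deducing $L_l<(2-c)L/(c-1)$ from the placement rule, applying Lemma~\ref{lem:2r1}, and noting that $(3-c)(2-c)/(c-1)\leq c-1$ is precisely where $c\geq 5/3$ enters. That part is correct.

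Your large-at-$l$ case, however, has a genuine gap for odd $m$. Your pre-assignment total-load bound $\sum_M\ell(M,l)\leq mL_l$ together with the contradiction hypothesis gives $\lfloor m/2\rfloor(3-c)L_l+\lceil m/2\rceil(c-1)L<mL_l$, and using $L\geq L_l$ the left side is at least $mL_l+(c-2)L_l$, which is \emph{below} $mL_l$ for every $c<2$; so there is no contradiction for odd $m$ without an extra term. Your proposed patch does not close the gap: the observation that the large-at-$l$ part of $\ell(j^*,l)$ exceeds $(3-c)L_l-(2-c)L$ cannot be ``plugged back'' to improve the lower bound on $\ell(j^*,l)$ itself, because what you already used for $M_{j^*}$ is $\ell(j^*,l)>(3-c)L_l$, and decomposing $\ell(j^*,l)$ into small plus large parts with an \emph{upper} bound on the small part cannot strengthen a lower bound on their sum. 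The missing ingredient is the contribution of $J_l$: since $J_l$ is large at time $l$, $p_l>(2c-3)L_l$, so the \emph{post}-assignment total load satisfies $\sum_M\ell(M,l)+p_l=\sum_{i=1}^{l}p_i\leq mL_l$. The paper sets its contradiction hypothesis as $\ell(j,l)>(c-1)L_l$ (rather than your weaker-looking $(c-1)L$) and computes $\lfloor m/2\rfloor(3-c)L_l+\lceil m/2\rceil(c-1)L_l+p_l\geq mL_l+(3c-5)L_l\geq mL_l$, contradicting the post-assignment budget for every parity of $m$. Note also that the paper's large case does not use the $M_{j^*}$ hypothesis at all; it is needed only in the small case. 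Threading $M_{j^*}$ into the large case is both unnecessary and, in the form you sketched, unsound. Adding the $p_l>(2c-3)L_l$ term to your budget (replacing $mL_l$ by $mL_l-p_l$ on the right side) would repair your odd-$m$ argument directly, again yielding the threshold $3c\geq 5$.
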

\begin{proof}
Consider any $M_j\in B$ and let $J_l$ be the last job assigned to it. First assume that $J_l$ is large
at time $l$. By the definition of {\em ALG$(c)$}, at time $l$ all machines of $A$ have a load greater
than $(3-c) L_l$. Moreover, $M_j$ is a least loaded machine in $B$ at time $l$. We argue that a least loaded 
machine in $B$ has a load of at most $(c-1) L_l$. If this were not the case, then immediately after the
assignment of $J_l$ the total load on the $m$ machines would be greater than
$\lfloor m/2\rfloor (3-c) L_l+\lceil m/2\rceil (c-1) L_l +p_l \geq (m/2-1/2) (3-c)L_l + (m/2+1/2) (c-1) L_l
+(2c-3)L_l = mL_l + (3c-5)L_l$. The inequality holds because $3-c \geq c-1$. Since $c \geq 5/3$ it follows
$\lfloor m/2\rfloor (3-c) L_l+\lceil m/2\rceil (c-1) L_l +p_l \geq mL_l  \geq \sum_{i=1}^l p_i$, which is 
a contradiction. Hence $\ell(j,n+1) = \ell(j,l)+p_l \leq (c-1)L_l+p_l \leq (c-1)L+p_l$.

Next assume that $J_l$ is small at time $l$. This implies $\ell_s(j,l) > (c-1)L_l$, for all $M_j\in A$. In particular,
$\ell_s(j^*,l) > (c-1)L_l$. Since $\ell_s(j^*,l)\leq \ell_s(j^*,n+1)< (2-c) L$ it follows
$L_l < (2-c)/(c-1)\cdot L$. By Lemma~\ref{lem:2r1}, $\ell(j,l+1)\leq (3-c)L_l+p_l$ and we conclude 
$\ell(j,n+1)=  \ell(j,l+1)\leq (3-c)L_l+p_l\leq (3-c)(2-c)/(c-1)\cdot L+p_l \leq (c-1) L+p_l$. 
The last inequality holds because $(3-c)(2-c)/(c-1)\leq c-1$ holds since $c\geq 5/3$.
\end{proof}

We next analyze the job migration phase assuming that the job removal step has already
taken place, i.e.\ each machine of $A$ has a load of at most $(c-1)L$ and the largest job was removed
from each machine of $B$. We show that given such a machine configuration each job of 
$R$ can be assigned to a machine so that a load bound of $cL$ is preserved. 
For the analysis of the reassignment step we study two cases depending on whether or not at time
$n+1$ all machines $M_j\in A$ have a load $\ell_s(j,n+1)\geq (2-c) L$.

\begin{lemma}\label{lem:2rem1}
If $\ell_s(j,n+1)\geq (2-c) L$, for all $M_j\in A$, then in the reassignment step all jobs of $R$ are
scheduled so that the resulting load on any of the machines is at most $c L$. 
\end{lemma}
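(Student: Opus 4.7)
\emph{Proof plan.} The strategy is to process the jobs $J_r^1,\ldots,J_r^{|R|}$ in the non-increasing size order used by the algorithm and maintain the invariant that after each placement every machine has load at most $cL$.

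First I would establish the invariant at the start of the reassignment step. Every $M_j\in A$ has load at most $(c-1)L$ by design of the removal step. For every $M_j\in B$ the removed job is the largest resident on $M_j$, so it has size at least that of the last-assigned job $J_l$; applying Lemma~\ref{lem:2r1} with $t=n+1$ gives $\ell(j,n+1)-p_l\leq (3-c)L$, so the post-removal load is at most $(3-c)L$. Both bounds are at most $cL$ since $c\geq 3/2$.

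For the inductive step, placement of $J_r^i$ on some $M_j\in B$ preserves the invariant by the algorithm's explicit test $\ell(j)+p_r^i\leq cL$. The nontrivial case is when $J_r^i$ is placed on the least-loaded $M_{j_0}\in A$; then every $M_j\in B$ must satisfy $\ell(j)>cL-p_r^i$. I would argue by contradiction: if additionally $\ell(j_0)+p_r^i>cL$, then since $M_{j_0}$ is least loaded in $A$, every machine (in $A$ or $B$) has load strictly greater than $cL-p_r^i$, so the total machine load exceeds $m(cL-p_r^i)$. Combining with the fact that the unplaced jobs from $R$ still contribute at least $p_r^i$, and comparing to $p_n^+\leq mL$, yields $p_r^i>m(c-1)L/(m-1)$. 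Hence whenever $p_r^i\leq m(c-1)L/(m-1)$ the $A$-placement is safe.

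The remaining task, and the main obstacle of the proof, is to show that any $J_r^i$ with $p_r^i>m(c-1)L/(m-1)$ is in fact always placed in $B$, so that the bad branch of the case analysis never fires for such a job. Since $L\geq 2p_n^{m+1}$ and $c\geq 5/3$ gives $(c-1)>1/2$, the entire sequence $\sigma$ contains at most $m$ jobs with size exceeding this threshold, which bounds the number of ``very large'' jobs that can appear in $R$. I would then invoke the standing hypothesis $\ell_s(j,n+1)\geq (2-c)L$ for every $M_j\in A$: this forces a substantial small-job mass onto $A$, constraining the loads that accumulate on $B$ both at the start of the reassignment and during the non-increasing processing, so that each very large $J_r^i$ still meets some $M_j\in B$ with $\ell(j)\leq cL-p_r^i$. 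Executing this bookkeeping carefully, using the processing order together with the post-removal bound $(3-c)L$ on $B$, is the step I expect to be the most delicate part of the argument.
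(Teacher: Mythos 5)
Your initial setup is correct (after removal every $M_j\in A$ has load at most $(c-1)L$, every $M_j\in B$ at most $(3-c)L$, by the same appeal to Lemma~\ref{lem:2r1} as the paper), and your Step~1 computation is also correct: if some $J_r^i$ could not be placed anywhere without exceeding $cL$, then every machine currently has load strictly above $cL-p_r^i$, and comparing with $\sum_i p_i \leq mL$ yields $p_r^i > m(c-1)L/(m-1)$. But note that this derivation nowhere uses the hypothesis $\ell_s(j,n+1)\geq(2-c)L$; the entire burden of the hypothesis is deferred to your Step~2, and that step is both unproven and, I believe, not achievable in the form you propose.

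Step~2 asks that every $J_r^i$ with $p_r^i > m(c-1)L/(m-1)$ be placed on $B$. This is a strictly stronger claim than what the lemma asserts, and it appears to be false in general. After the removal step a machine in $B$ may have load as large as $(3-c)L$, and a ``huge'' job can have size arbitrarily close to $L$; since $(3-c)L + L = (4-c)L > cL$ whenever $c<2$, such a job genuinely cannot fit on that machine. The lower bound $\ell_s(j,n+1)\geq(2-c)L$ on the \emph{$A$-side small-job mass} does not translate into an upper bound on $B$-side loads better than $(3-c)L$, so there is no reason the algorithm cannot be forced to route a huge job to $A$. Your plan therefore tries to prove an intermediate statement that the algorithm does not guarantee. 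The paper avoids this trap entirely: it does not try to steer large jobs away from $A$. Instead, at the hypothetical point where some $J_k$ fits nowhere, it observes that every $M_j\in A$ must have received at least one reassignment job $J_{i_j}$ (since it started at $\leq(c-1)L$ and is now $>(c-1)L$), that the preserved small-job reserve gives $\ell(j)\geq(2-c)L+p_{i_j}$, and that $J_{i_j}$'s placement on $A$ implies $\ell(\lfloor m/2\rfloor+j)+p_{i_j}>cL$; pairing each $A$-machine with the corresponding $B$-machine makes each pair contribute strictly more than $2L$, and summing (handling odd $m$ separately) exceeds $mL$, a contradiction. That pairing argument is where the hypothesis enters essentially. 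You also omit a point the paper establishes first: the property $\ell_s(j,\cdot)\geq(2-c)L$ must be shown to survive the removal step itself (since small jobs may be removed from an $A$-machine), which the paper proves by a short one-step calculation using $p_i\leq(2c-3)L$.
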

\begin{proof}
By assumption, at the end of the job arrival phase $\ell_s(j,n+1)\geq (2-c)L$, for all $M_j\in A$. 
We first show that this property is maintained throughout the job removal step. Suppose that a job
$J_i$ that is small at time $n+1$ is removed from a machine $M_j\in A$. Since {\em ALG$(c)$} always
removes the largest jobs from a machine, $M_j$ currently contains no jobs that are large at time
$n+1$. Hence $M_j$'s current load $\ell(j)$ is equal to its current load $\ell_s(j)$ consisting
of jobs that are small at time $n+1$. Since a job removal needs to be performed,  $\ell_s(j) = \ell(j)
> (c-1) L$. Since $p_i \leq (2c-3) L$, the removal of $J_i$ leads to a load consisting of small jobs
of at least $\ell_s(j) -p_l > (c-1) L - (2c-3) L = (2-c)L$.

After the job removal step each machine $M_j\in A$ has a load of at most $(c-1) L$. By~Lemma~\ref{lem:2r1}
each machine of $B$ has a load of at most $(3-c) L<cL$ after {\em ALG$(c)$} has removed the largest
job from any of these machines.  We show that each $J_k\in R$
can be scheduled on a machine such that the resulting load is at most $c L$. Consider any
$J_k\in R$. There holds $p_k\leq L$. Suppose that $J_k$ cannot be feasibly scheduled on any of the
machines. Let $\ell(j)$ denote $M_j$'s load immediately before the assignment of $J_k$, $1\leq j\leq m$.
If $J_k$ cannot be placed on a machine in $A$, then each machine $M_j\in A$ must have a load greater
than $(c-1) L$: If $\ell(j)\leq (c-1) L$, then $\ell(j) + p_k\leq c L$ and the assignment of $J_k$ to
$M_j$ would be feasible. Hence since the start of the reassignment step each machine $M_j\in A$ must
have received at least one job $J_{i_j}$ and its current load is $\ell(j) \geq (2-c) L + p_{i_j}$.
When $J_{i_j}$ was reassigned, it could not be scheduled on any machine in $B$ without exceeding
a load of $c L$. This implies, in particular, that $\ell(\lfloor m/2\rfloor+j) + p_{i_j} > c L$.
Recall that the machines of $A$ are numbered $1, \ldots, \lfloor m/2\rfloor$ and those of $B$ are
numbered $\lfloor m/2\rfloor+1, \ldots, m$. Finally, since $J_k$ cannot be placed on a machine in
$B$, we have $\ell(m) + p_k > c L$.

It follows that when $J_k$ has to be scheduled the total processing time of the jobs is at least 
$$\sum_{j=1}^m \ell(j) + p_k \geq \lfloor m/2\rfloor (2-c) L + \sum_{j=1}^{\lfloor m/2\rfloor} p_{i_j} + 
\sum_{j=\lfloor m/2\rfloor+1}^m \ell(j) + p_k.$$
If $m$ is even, then $\sum_{j=\lfloor m/2\rfloor+1}^m \ell(j) = 
\sum_{j=1}^{m/2} \ell(m/2 +j)$. In this case we have
$$\sum_{j=1}^m \ell(j) + p_k \geq m/2 \cdot (2-c)L + \sum_{j=1}^{m/2} (\ell(m/2 +j) +p_{i_j}) + p_k
> m/2 \cdot (2-c) L + m/2\cdot c L \ = \ mL.$$ 
If $m$ is odd, then $\sum_{j=\lfloor m/2\rfloor+1}^m \ell(j) = 
\sum_{j=1}^{\lfloor m/2\rfloor} \ell(\lfloor m/2\rfloor +j)+ \ell(m)$
and 
\begin{eqnarray*}
\sum_{j=1}^m \ell(j) + p_k &\geq& \lfloor m/2\rfloor  \cdot (2-c) L  + \sum_{j=1}^{\lfloor m/2\rfloor} (\ell(\lfloor m/2\rfloor +j)+p_{i_j})
+ \ell(m)+ p_k\\
&> & \lfloor m/2\rfloor  \cdot (2-c) L  + \lfloor m/2\rfloor  \cdot c L +  c L\\
&=& (m/2 - 1/2)2L + c L \  > \ mL.
\end{eqnarray*}
In both cases with obtain $\sum_{i=1}^n p_i \geq  \sum_{j=1}^m \ell(j) + p_k > mL$, which contradicts the
definition of $L$.
\end{proof}

\begin{lemma}\label{lem:2rem2}
If $\ell_s(j^*,n+1)< (2-c) L$, for some $M_{j^*}\in A$, then in the reassignment step all jobs of $R$ are
scheduled so that the resulting load on any of the machines is at most $c L$. 
\end{lemma}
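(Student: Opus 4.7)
The plan is to follow the template of Lemma~\ref{lem:2rem1}: first establish post-removal load bounds, then argue by contradiction that no placement during the reassignment step can exceed $cL$. The new ingredient here is that the hypothesis $\ell_s(j^*, n+1) < (2-c)L$ activates Lemma~\ref{lem:2r2}, supplying a tighter bound on the $B$-side loads than was available in the previous case.

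First I would verify that after the removal step every machine has load at most $(c-1)L$. For $M_j \in A$ this follows directly from the removal rule of {\em ALG$(c)$\/}. For $M_j \in B$, I would apply Lemma~\ref{lem:2r2}: if $J_l$ is the last job assigned to $M_j$ and $\hat{p}(j) \geq p_l$ is the largest job on $M_j$ (the one the algorithm removes), then the residual load is $\ell(j, n+1) - \hat{p}(j) \leq \ell(j, n+1) - p_l \leq (c-1)L$. Since $L \geq p_n^1$, every $J_r^i \in R$ has $p_r^i \leq L$, so any single job always fits somewhere without exceeding $cL$.

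Next I would suppose for contradiction that $J_k = J_r^{i_0}$ is the first reassigned job whose placement would produce a load above $cL$. At that moment every $M_j \in B$ satisfies $\ell(j) + p_k > cL$ (otherwise $J_k$ would be placed there), and the least loaded $M_j \in A$ does too, so every $M_j \in A$ has $\ell(j) > cL - p_k \geq (c-1)L$. Since the load on $M_{j^*}$ after removal is at most $(c-1)L$, $M_{j^*}$ has received at least one reassigned job; letting $J_{i_{j^*}}$ be the first such job, at its assignment time $M_{j^*}$ was least loaded in $A$ and $J_{i_{j^*}}$ spilled from $B$, so every $M_{j'} \in B$ satisfied $\ell(j') > cL - p_{i_{j^*}}$, and these $B$-loads only grow monotonically. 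I would then pair $M_j \in A$ with $M_{\lfloor m/2\rfloor + j} \in B$ as in Lemma~\ref{lem:2rem1}, lower-bound each pair's combined load, and derive $\sum_{j=1}^m \ell(j) + p_k > mL$, contradicting $\sum_{i=1}^n p_i \leq mL$.

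The main obstacle will be handling those $M_j \in A$ for which $\ell_s(j, n+1) < (2-c)L$, since the Lemma~\ref{lem:2rem1}-style lower bound $\ell(j) \geq (2-c)L + p_{i_j}$ after the first reassigned job is lost for such machines. I expect to overcome this by noting that for $c \geq 5/3$ such a machine can retain at most one large-at-time-$(n+1)$ residual job after removal: two such jobs would contribute more than $2(2c-3)L \geq (c-1)L$ and violate the post-removal ceiling. Consequently the post-removal load of such an $M_j$ is bounded by $\ell_s(j, n+1) + L < (3-c)L$, and combining this weaker $A$-side bound with the tighter $B$-side bound from Lemma~\ref{lem:2r2} will still yield a pairwise sum exceeding $2L$, closing the contradiction. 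The hypothesis $c \geq 5/3$ enters through essentially the same algebraic inequality that was used to prove Lemma~\ref{lem:2r2}.
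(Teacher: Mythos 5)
The paper's proof is much simpler than what you propose and takes a genuinely different route, and your proposal contains a concrete logical error that I do not believe can be patched along the lines you sketch.

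The paper observes that under the hypothesis $\ell_s(j^*,n+1)< (2-c) L$, Lemma~\ref{lem:2r2} gives a \emph{uniform} post-removal bound: every $M_j\in B$ as well as every $M_j\in A$ has load at most $(c-1)L$ when the reassignment step begins. Because all $m$ machines start at $\leq (c-1)L$ and every job in $R$ has $p_r^i\leq L$, the first $m$ jobs each fit somewhere with total load $\leq (c-1)L+L=cL$ (a simple counting argument shows some machine is still untouched). After the first $m$ jobs are placed, since the jobs are sorted and $L\geq 2p_n^{m+1}$, every remaining job has processing time $\leq L/2$; a least-loaded machine always has load $\leq L$ (total is $\leq mL$), so these fit with $L+L/2<cL$ for $c\geq 5/3$. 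No pairing, no contradiction argument, no case analysis over machines in $A$.

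Your approach of replaying the pairing contradiction of Lemma~\ref{lem:2rem1} is the wrong tool here, and you have correctly identified why it breaks: that argument needs a \emph{lower} bound of the form $\ell(j)\geq (2-c)L+p_{i_j}$ on each $A$-machine after it receives its first reassigned job, which in turn requires the post-removal $A$-load to be at least $(2-c)L$. That lower bound is exactly what the hypothesis of this lemma destroys. Your proposed fix does not repair this: the bound $\ell_s(j,n+1)+L<(3-c)L$ is an \emph{upper} bound on the post-removal load, and feeding an upper bound into a pairing argument that requires lower bounds cannot yield the inequality $\sum_j \ell(j)+p_k>mL$. (It is also weaker than what you already know: the removal rule itself guarantees the post-removal $A$-load is at most $(c-1)L\leq(3-c)L$.) You also conflate the single distinguished machine $M_{j^*}$ with ``those $M_j\in A$ for which $\ell_s(j,n+1)<(2-c)L$''; the hypothesis only asserts the existence of one such machine, so you would still need to handle a mixture of machine types. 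The right reaction to noticing that the $B$-side bound is now also $(c-1)L$ is not to push harder on the pairing argument but to observe, as the paper does, that the whole configuration is now so flat that a direct greedy argument closes the lemma.
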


\begin{proof}
In the removal step {\em ALG$(c)$} removes the largest job from each machine $M_j\in B$. Hence, if
$\ell_s(j^*,n+1)< (2-c) L$ for some $M_j\in A$, then by Lemma~\ref{lem:2r2} each machine of $B$ has a load
of at most $(c-1)L$ after the removal step. Moreover, each machine of $A$ has a load of at most $(c-1)L$ after
the job removal.

Hence when the reassignment step starts, all machines have a load of at most $(c-1) L$. By the definition of $L$
each job has a processing time of at most $L$. Hence in the reassignment step the first $m$ jobs can be
scheduled without exceeding a load of $c L$ on any of the machines. {\em ALG$(c)$} sorts the jobs
of $R$ in order of non-increasing processing times. Thus when $m$ jobs of $R$ have been scheduled, each of
the remaining jobs has a processing time of at most $1/2 L$. This holds true because by the definition of
$L$ there cannot exist $m+1$ jobs of processing time greater than $1/2 L$. Each job of processing time
at most $1/2 L$ can be scheduled on a least loaded machine without exceeding a load of $c L$ since
$L+ 1/2 L < cL$. Hence every remaining job can be scheduled on a machine of $B$ and $A$. 
\end{proof}
Lemmas~\ref{lem:2rem1} and \ref{lem:2rem2} imply Theorem~\ref{th:family}.

\subsubsection{Analysis of the job migrations}\label{sec:family22}

It remains to evaluate the number of job removals in the job migration phase. We first consider
{\em ALG$(5/3)$}.

\begin{lemma}\label{lem:2r3}
In the removal step ALG$(5/3)$ removes at most seven jobs from each machine $M_j\in A$.
\end{lemma}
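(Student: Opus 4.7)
The plan is to replicate the argument of Lemma~\ref{lem:r1}, adapted to the parameters of ALG$(5/3)$ (where $c=5/3$, the small threshold is $(2c-3)L_t = L_t/3$, machines in $A$ accept small jobs only when $\ell_s(j,t)\le 2L_t/3$, and accept large jobs only when $\ell(j,t)\le 4L_t/3$). Fix any $M_j\in A$ and consider its state at time $n+1$, just before the removal step. Let $t^*$ be the earliest time such that $\ell_s(j,t) > 2L_t/3$ holds for all $t^*\le t\le n$; if no such $t^*\le n$ exists, set $t^* = n+1$ and treat that case separately. Partition the jobs currently on $M_j$ into $T_1$ (assigned at or before $t^*-1$ and small at time $t^*-1$), $T_2$ (assigned at or before $t^*-1$ and large at time $t^*-1$), and $T_3$ (assigned at or after $t^*$). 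By the choice of $t^*$, no small job is placed on $M_j$ during $[t^*,n]$, so every job in $T_2\cup T_3$ is large at its assignment time, paralleling Claim~\ref{lem:r1}.\ref{c:r11}.

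Next I would prove analogs of Claims~\ref{lem:r1}.\ref{c:r12}--\ref{lem:r1}.\ref{c:r16} with the constants adjusted for $c = 5/3$: (i) the total processing time of $T_1$ minus its last-assigned job is at most $2L_{t^*-1}/3$, directly from the definition of $t^*$; (ii) $|T_2|\le 4$, since the last job $J_l \in T_2$ satisfies $\ell(j,l)\le 4L_l/3$ by the large-job placement rule for $A$, while each of the other $|T_2|-1$ jobs on $M_j$ that are large at time $t^*-1$ contributes more than $L_l/3$ to $\ell(j,l)$; (iii) for every $J_l\in T_3$, $\ell(j,l)\le 4L_l/3\le 4L/3$, again from the large-job rule; (iv) an analog of Claim~\ref{lem:r1}.\ref{c:r15} asserting that removing the last $T_3$ job $J_l$ together with enough other jobs on $M_j$ of processing time at least a threshold $s_0$ drives the load to at most $2L/3$; and (v) an analog of Claim~\ref{lem:r1}.\ref{c:r16} asserting that if some $J_l\in T_3$ has $p_l < s_0$, then $L_l$, and hence $\ell(j,l)$, is small enough for a direct removal argument to finish.

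The proof concludes with two cases. If $|T_2 \cup T_3|\le 6$, remove all of $T_2\cup T_3$ together with the last job of $T_1$, which totals at most seven removals and leaves load at most $2L_{t^*-1}/3\le 2L/3$ by (i). If $|T_2\cup T_3|\ge 7$, then $|T_3|\ge 3$ since $|T_2|\le 4$, and removing the last few jobs of $T_3$ via (iv)--(v) gives a count well below seven. The hard part is the tuning of the threshold $s_0$ in (iv)--(v): the bound $\ell(j,l)\le L_l$ used in Lemma~\ref{lem:r1} is replaced here by the weaker $\ell(j,l)\le 4L_l/3$, because the large-job rule of ALG$(c)$ places $J_l$ on a least-loaded machine in $A$ rather than globally; balancing both claims against this weaker bound is what pushes the count up to seven and requires either a slightly larger parameter analog of $k$ in (iv) or a small additional removal in (v) so that the target $2L/3$ is actually reached.
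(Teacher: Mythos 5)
Your decomposition into $T_1,T_2,T_3$ and the claims (i)--(v) closely track the paper's argument, and you correctly identify the threshold $L/6$ and the weakened load bound $\ell(j,l)\le \tfrac43 L_l$ as the technical obstructions. But the plan has a genuine gap: neither your (iv) nor your (v) handles the configuration in which every job of $T_3\setminus\{J_l\}$ has processing time at least $L/6$ yet fewer than four jobs of $T_2\cup T_3\setminus\{J_l\}$ are that large, because the only ``small'' job (processing time $<L/6$) lies in $T_2$. This case does occur: with $|T_2\cup T_3|\ge 7$, $|T_2|\le 4$, and $|T_3\setminus\{J_l\}|\le 3$ it can happen that, say, $|T_2|=4$, $|T_3|=3$ and the undersized job is one of the $T_2$ jobs. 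Your (v) is explicitly restricted to $J_l\in T_3$, and your (iv) requires four large jobs to exist, so neither covers it. The paper closes this case with a new claim (the analog of Claim~\ref{lem:2r3}.\ref{c:2r37}) that has no counterpart in Lemma~\ref{lem:r1}: if $J_k\in T_2$ has $p_k<L/6$, then $L_{t^*-1}<L/2$, which together with (i) forces $\sum_{T_1}p_i + p_k\le 2/3\,L$, so it suffices to remove $T_2\setminus\{J_k\}$ and $T_3$ --- at most $3+4=7$ jobs.

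Two related points. First, your proposed remedies (``slightly larger $k$'' or ``a small additional removal in (v)'') do not fix this: a larger $k$ only helps when that many large jobs exist, which is exactly what fails here, and (v) is structurally about $T_3$. The missing ingredient is an entirely different claim about the $T_2/T_1$ side. Second, your closing assertion that the count in the $|T_2\cup T_3|\ge 7$ case is ``well below seven'' is wrong; in the $T_2$-small-job subcase the bound is exactly seven, which is why the lemma's constant is seven and not something smaller.
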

\begin{proof}
We show that, for any $M_j\in A$, it suffices to remove at most seven jobs from $M_j$ such that the
resulting load is upper bounded by $2/3L$. The lemma then follows because in each removal
operation {\em ALG$(5/3)$} removes the largest job.

First assume that $\ell_s(j,n+1)\leq 2/3L$. In this case it suffices to remove all jobs that are large 
at time $n+1$. Each such job has a processing time greater than $1/3L$ and was large at the time it was
assigned to $M_j$. Consider the last time when such a job was assigned to $M_j$. At that time $M_j$ had
a load of at most $4/3L$ and hence could contain no more than three jobs of processing time greater than
$1/3L$. Thus at time $n+1$ machine $M_j$ contains at most four of these large jobs.

Next assume $\ell_s(j,n+1)> 2/3L$. If $\ell_s(j,n)\leq  2/3L_n$, then $J_n$ is assigned to $M_j$ because 
$L=L_n$. Hence it suffices to remove $J_n$ and, as shown in the last paragraph, four additional jobs of
processing time greater than $1/3 L_n = 1/3L$.

In the following we concentrate on the case that $\ell_s(j,n+1)> 2/3L$ and $\ell_s(j,n)> 2/3L_n$.
Let $t^*$ be the earliest time such that  $\ell_s(j,t)> 2/3L_t$ holds for all times $t\geq t^*$. We
have $t^*>1$ because $\ell_s(j,0)=0$. We partition the jobs that reside on $M_j$ at time $n+1$ into
three sets. Set $T_1$ (set $T_2$) contains those jobs that were assigned to $M_j$ at or before time $t^*-1$
are small (large) at time $t^*-1$. Set $T_3$ contains the remaining jobs, which have arrived at or
after time $t^*$.

\begin{enumerate}[\bf { Claim~\ref{lem:2r3}.}1.]
\item Each job of $T_2\cup T_3$ is large at the time it is assigned to $M_j$.\label{c:2r31}
\item There holds $\sum_{J_i\in T_1\setminus\{J_l\}} p_i \leq 2/3 L_{t^*-1}$, where $J_l$ is the 
 job of $T_1$ that was assigned last to $M_j$.\label{c:2r32}
\item There holds $|T_2| \leq 4$.\label{c:2r33}
\item For any $J_l\in T_3$, $M_j$'s load immediately before the assignment of $J_l$ is at most $4/3 L_l$.\label{c:2r34}
\item Let $J_l\in T_3$ be the last job assigned to $M_j$. If $M_j$ contains at least four jobs, 
 different from $J_l$, each having a processing time of at least $1/6 L$, then it suffices
 to remove these four jobs and $J_l$ such that $M_j$'s resulting load is upper bounded by
 $2/3 L$. \label{c:2r35}
\item If there exists a $J_l\in T_3$ with $p_l<1/6L$, then $M_j$'s load immediately before
the assignment of $J_l$ is at most $2/3L$.\label{c:2r36}  
\item If there exists a $J_k\in T_2$ with $p_k<1/6L$, then $\sum_{J_i\in T_1}p_i + p_k\leq
2/3L$.\label{c:2r37}  
\end{enumerate}

{\em Proof of Claim~\ref{lem:2r3}.\ref{c:2r31}.} The jobs of $T_2$ are large at time $t^*-1$ and hence
at the time they were assigned to $M_j$. By the definition of $t^*$, $\ell_s(j,t)> 2/3 L_t$,
for any $t^*\leq t \leq n$, and hence {\em ALG($5/3$)} does not assign small jobs to $M_j$.

{\em Proof of Claim~\ref{lem:2r3}.\ref{c:2r32}.} By the choice of $t^*$, all jobs of $T_1\setminus\{J_l\}$ 
are small at time $t^*-1$ and their total processing time is at most $\ell_s(j,t^*-1)\leq 2/3 L_{t^*-1}$.

{\em Proof of Claim~\ref{lem:2r3}.\ref{c:2r33}.} Each job of $T_2$ has a processing time greater
than $1/3 L_{t^*-1}$. Consider the last time $l$ when a job $J_l\in T_2$ was assigned to $M_j$. 
Immediately before the assignment, $M_j$ had a load of at most $4/3 L_{t^*-1}$ and hence could contain 
not more than three jobs of processing time greater than $1/3 L_{t^*-1}$.

{\em Proof of Claim~\ref{lem:2r3}.\ref{c:2r34}.} Consider any $J_l\in T_3$. By 
Claim~\ref{lem:2r3}.\ref{c:2r31} $J_l$ is large at time $l$ and hence $M_j$'s load prior 
to the assignment of $J_l$ is at most $4/3 L_l$. 

{\em Proof of Claim~\ref{lem:2r3}.\ref{c:2r35}.} By Claim~\ref{lem:2r3}.\ref{c:2r34} $M_j$'s
load immediately before the assignment of $J_l$ is at most $4/3L_l$. Removing four jobs of
processing time at least $1/6L$ each as well as $J_l$ reduces $M_j$'s load to a value of at most 
$2/3 L$.

{\em Proof of Claim~\ref{lem:2r3}.\ref{c:2r36}.} By Claim~\ref{lem:2r3}.\ref{c:2r31} $J_l$ is large 
at time $l$ and hence $p_l> 1/3 L_l$. Since $p_l<1/6L$, we have $L_l<1/2L$.
By Claim~\ref{lem:2r3}.\ref{c:2r34}, $M_j$'s load immediately before the assignment of $J_l$ is at most 
$4/3 L_l$ and hence at most $2/3L$.

{\em Proof of Claim~\ref{lem:2r3}.\ref{c:2r37}.} Job $J_k$ is large at time $t^*-1$ and hence 
$p_k > 1/3 L_{t^*-1}$. Since $p_k < 1/6 L$ it follows $L_{t^*-1} < 1/2 L$. By Claim~\ref{lem:2r3}.\ref{c:2r32},
we have $\sum_{J_i\in T_1} p_i \leq 2/3 L_{t^*-1} + p_l$, where $J_l$ is the last job of $T_1$
assigned to $M_j$. Since $p_l$ is small at time $t^*-1$ we have $p_l \leq 1/3 L_{t^*-1} < 1/6 L$. 
In summary $\sum_{J_i\in T_1} p_i + p_k \leq 1/3L + 1/6 L + 1/6 L = 2/3 L$.

We proceed with the actual proof and distinguish two cases.

{\bf Case 1:} If $|T_2\cup T_3|\leq 4$, then by Claim~\ref{lem:2r3}.\ref{c:2r32} it suffices to remove the
jobs of $T_2\cup T_3$ and the last job of $T_1$ assigned to $M_j$. 

{\bf Case  2:} Assume $|T_2\cup T_3|\geq 5$. Then by Claim~\ref{lem:2r3}.\ref{c:2r33} there holds $|T_2| \leq 4$
and thus $T_3\neq \emptyset$. Let $J_l$ be the last job of $T_3$ assigned to $M_j$. If 
$T_2\cup T_3\setminus \{J_l\}$ contains at least four jobs of processing time at least $1/6 L$, then by 
Claim~\ref{lem:2r3}.\ref{c:2r35} it suffices to remove these four jobs and $J_l$. So suppose that this is
not the case. Then $T_2\cup T_3\setminus \{J_l\}$ must contain a job of processing time smaller than $1/6 L$.

Assume there exists a job in $T_3\setminus \{J_l\}$ with this property. Then let $J_{l'}$ be the last job 
assigned to $M_j$ having a processing time smaller than $1/6 L$. By Claim~\ref{lem:2r3}.\ref{c:2r36}, 
immediately before the assignment of $J_{l'}$ machine $M_j$ has a load of at most $2/3 L$. Therefore it
suffices to remove $J_{l'}$ and the jobs of $T_3$ subsequently scheduled on $M_j$. In addition to $J_l$,
this sequence consists of at most three jobs $J_k\neq J_l$, because $T_3\setminus \{J_l\}$ contains less 
than four jobs of processing time at least $1/6 L$. 

Finally consider the case that all jobs of $T_3\setminus \{J_l\}$ have a processing time of at least
$1/6 L$ and there is a job $J_{l'}\in T_2$ having a processing time smaller than $1/6 L$. By 
Claim~\ref{lem:2r3}.\ref{c:2r37} it suffices to remove  $T_2\setminus \{J_{l'}\}\cup T_3$. By 
Claim~\ref{lem:2r3}.\ref{c:2r33} we have $|T_2\setminus \{J_{l'}\}| \leq 3$. Since $T_3\setminus \{J_l\}$
contains less than four jobs, each having a processing time of at least $1/6 L$, we have $|T_3|\leq 4$.
We conclude that at most seven jobs have to be removed.
\end{proof}

Lemma~\ref{lem:2r2} ensures that in the job removal step {\em ALG$(5/3)$}
removes at most $7$ jobs from any machine in $A$. For any machine in $B$, one job is removed.
Hence the total number of migrations is at most $7\lfloor m/2 \rfloor + \lceil m/2 \rceil \leq 4m$.
This concludes the proof of Theorem~\ref{th:3}.
We next turn to the algorithm {\em ALG$(1.75)$}.
\begin{lemma}\label{lem:3r3}
In the job removal step  ALG$(1.75)$ removes at most four jobs from each machine $M_j\in A$. 
\end{lemma}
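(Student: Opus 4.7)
The plan is to mirror the architecture of the proof of Lemma~\ref{lem:2r3}, now instantiated with the $c=1.75$ constants: small/large threshold $L_t/2$, placement caps $\ell_s\leq 0.75L_t$ (for small jobs) and $\ell\leq 1.25L_t$ (for large jobs) on machines of $A$, and post-removal target $(c-1)L=0.75L$. For any $M_j\in A$ it suffices to exhibit a set of at most four jobs on $M_j$ whose removal brings its load to $\leq 0.75L$, since \emph{ALG}$(1.75)$ always strips the largest remaining job and a set of large enough total size triggers a stop no later than after four removals.

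I split into the same three outer cases as in Lemma~\ref{lem:2r3}. If $\ell_s(j,n+1)\leq 0.75L$, removing the jobs on $M_j$ that are large at time $n+1$ leaves load $\ell_s(j,n+1)\leq 0.75L$; each such job has size $>0.5L$, and by the placement cap of $1.25L$ there are at most $\lfloor 1.25/0.5\rfloor=2$ such jobs before the last one, for a total of three. If $\ell_s(j,n+1)>0.75L$ but $\ell_s(j,n)\leq 0.75L_n$, then $J_n$ lies on $M_j$ and I remove $J_n$ together with the at most three large-at-$n+1$ jobs, for a total of four.

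The main case has $\ell_s(j,n+1)>0.75L$ and $\ell_s(j,n)>0.75L_n$. Let $t^*$ be the earliest time past which $\ell_s(j,\cdot)$ stays above $0.75L_{t}$, and partition $M_j$'s jobs into $T_1,T_2,T_3$ exactly as in Lemma~\ref{lem:2r3}. The $c=1.75$ analogues of Claims~\ref{lem:2r3}.\ref{c:2r31}--\ref{lem:2r3}.\ref{c:2r37} read: every $T_2\cup T_3$ job is large when placed; $\sum_{T_1\setminus\{J_l\}}p_i\leq 0.75L_{t^*-1}$; $|T_2|\leq 3$ (from $1.25L_{t^*-1}/(0.5L_{t^*-1})=2.5$ and then $+1$ for the last arrival); the load just before any $J_l\in T_3$ is $\leq 1.25L_l\leq 1.25L$; removing the last $T_3$-job $J_l$ together with any three other jobs on $M_j$ of size $\geq L/6$ already gives $1.25L-3\cdot L/6=0.75L$; if some $T_3$-job has size $<L/6$ then $L_{l}<L/3$ and so the load just before its assignment is $<0.75L$; and if some $T_2$-job $J_{l'}$ has size $<L/6$ then $L_{t^*-1}<L/3$, hence $\sum_{T_1}p_i+p_{l'}<1.25L_{t^*-1}+L/6<5L/12+2L/12=7L/12<0.75L$. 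The sub-case split on $|T_2\cup T_3|\leq 3$ versus $\geq 4$, and within the latter on whether $T_2\cup T_3\setminus\{J_l\}$ contains three jobs of size $\geq L/6$, proceeds as in Lemma~\ref{lem:2r3}, with each branch now yielding at most four removals.

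The only branch in which the direct analogue of Lemma~\ref{lem:2r3} would give five rather than four removals is the extremal configuration $|T_2|=|T_3|=3$ inside the ``small-$T_2$-job'' sub-case, and this is the main obstacle I foresee. I will handle it by observing that here $L_{t^*-1}<L/3$ forces every $T_1$ job and every $T_2$ job to have size $<L/6$, while the constraint that at most two jobs of $T_2\cup T_3\setminus\{J_l\}$ have size $\geq L/6$ forces all three $T_3$ jobs to be the three largest jobs on $M_j$; thus the four largest jobs consist of the three $T_3$ jobs together with the single largest $T_2$ job, and their removal leaves at most $\sum_{T_1}p_i+2\cdot L/6<1.25L_{t^*-1}+L/3<5L/12+4L/12=0.75L$. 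Every other subcase is a direct translation of the $c=5/3$ argument with $1/3$ replaced by $1/2$, $4/3$ by $1.25$, $2/3$ by $0.75$, and $1/6$ retained as the natural intermediate threshold.
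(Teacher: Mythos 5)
Your proposal is correct and follows the paper's proof of this lemma almost verbatim: the same three top-level cases, the same $T_1,T_2,T_3$ partition, and the same seven claims. The only divergences are cosmetic: the paper organizes the final case analysis by $|T_3|$ rather than by $|T_2\cup T_3|$, and its version of your Claim~7 already allows a kept subset $T_2'\subseteq T_2$ with $|T_2'|\le 2$, so the separate fix you supply for $|T_2|=|T_3|=3$ is in effect a re-derivation of that stronger claim. One small imprecision worth flagging: the fourth-largest job on $M_j$ need not be the largest $T_2$ job — it could be a $T_1$ job, since in that extremal configuration every job of $T_1\cup T_2$ has size $<L/6$ and nothing distinguishes $T_1$ from $T_2$ by size alone. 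This is harmless because the algorithm's greedy removal of the four largest jobs reduces the load by at least as much as removing any exhibited set of four jobs, but you should present $\{$three $T_3$ jobs$\}\cup\{$largest $T_2$ job$\}$ as a witness set whose removal yields load below $0.75L$, rather than assert that it coincides with the four largest.
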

\begin{proof}
We show that, for any $M_j\in A$, it suffices to remove at most four jobs from $M_j$ such that the
resulting load is upper bounded by $0.75L$. 

First assume that $\ell_s(j,n+1)\leq 0.75L$. Then it suffices to remove all jobs that are large 
at time $n+1$. Each such job has a processing time greater than $0.5L$ and was large at the time it was
assigned to $M_j$. Consider the last time when such a job was assigned to $M_j$. At that time $M_j$ had
a load of at most $1.25L$ and hence could contain no more than two jobs of processing time greater than
$0.5L$. Thus at time $n+1$ machine $M_j$ contains at most three of these large jobs.

Next assume $\ell_s(j,n+1)> 0.75L$. If $\ell_s(j,n)\leq  0.75 L_n$, then $J_n$ is assigned to $M_j$ because 
$L=L_n$. Hence it suffices to remove $J_n$ and, as shown in the last paragraph, three additional jobs of
processing time greater than $0.5 L_n = 0.5L$.

We concentrate on the case that $\ell_s(j,n+1)> 0.75L$ and $\ell_s(j,n)> 0.75L_n$.
Let $t^*$ be the earliest time such that  $\ell_s(j,t)> 0.75L_t$ holds for all times $t\geq t^*$. 
We partition the jobs that reside on $M_j$ at time $n+1$ into
three sets. Set $T_1$ (set $T_2$) contains those jobs that were assigned to $M_j$ at or before time $t^*-1$
are small (large) at time $t^*-1$. Set $T_3$ contains the remaining jobs, which have arrived at or
after time $t^*$.
\begin{enumerate}[\bf { Claim~\ref{lem:3r3}.}1.]
\item Each job of $T_2\cup T_3$ is large at the time it is assigned to $M_j$.\label{c:3r31}
\item There holds $\sum_{J_i\in T_1\setminus\{J_l\}} p_i \leq 0.75 L_{t^*-1}$, where $J_l$ is the 
 job of $T_1$ that was assigned last to $M_j$.\label{c:3r32}
\item There holds $|T_2| \leq 3$.\label{c:3r33}
\item For any $J_l\in T_3$, $M_j$'s load immediately before the assignment of $J_l$ is at most $1.25 L_l$.\label{c:3r34}
\item Let $J_l\in T_3$ be the last job assigned to $M_j$. If $M_j$ contains at least three jobs, 
 different from $J_l$, each having a processing time of at least $1/6 L$, then it suffices
 to remove these three jobs and $J_l$ such that $M_j$'s resulting load is upper bounded by
 $0.75 L$. \label{c:3r35}
\item If there exists a $J_l\in T_3$ with $p_l<1/6L$, then $M_j$'s load immediately after
the assignment of $J_l$ is at most $0.75L$.\label{c:3r36}  
\item If $T'_2\subseteq T_2$ is a subset with $1\leq |T_2'|\leq 2$ and $p_i\leq 1/6L$, for all  $J_i\in T_2$, 
then $\sum_{J_i\in T_1}p_i + \sum_{J_i\in T'_2}p_i\leq 0.75L$.\label{c:3r37}  
\end{enumerate}

{\em Proof of Claim~\ref{lem:3r3}.\ref{c:3r31}.} The jobs of $T_2$ are large at time $t^*-1$ and hence
at the time they were assigned to $M_j$. By the definition of $t^*$, $\ell_s(j,t)> 0.75 L_t$,
for any $t^*\leq t \leq n$, and hence {\em ALG($1.75$)} does not assign small jobs to $M_j$
at times $t\geq t^*$.

{\em Proof of Claim~\ref{lem:3r3}.\ref{c:3r32}.} All jobs of $T_1\setminus\{J_l\}$ 
are small at time $t^*-1$ and their total processing time is at most $\ell_s(j,t^*-1)\leq 0.75 L_{t^*-1}$,
by the choice of $t^*$. 

{\em Proof of Claim~\ref{lem:3r3}.\ref{c:3r33}.} Each job of $T_2$ has a processing time greater
than $0.5 L_{t^*-1}$. Consider the last time $l$ when a job $J_l\in T_2$ was assigned to $M_j$. 
Immediately before the assignment, $M_j$ had a load of at most $1.25 L_{t^*-1}$ and hence could contain 
not more than two jobs of processing time greater than $0.5 L_{t^*-1}$.

{\em Proof of Claim~\ref{lem:3r3}.\ref{c:3r34}.} Consider any $J_l\in T_3$. By 
Claim~\ref{lem:3r3}.\ref{c:2r31} $J_l$ is large at time $l$ and hence $M_j$'s load prior 
to the assignment of $J_l$ is at most $1.25 L_l$. 

{\em Proof of Claim~\ref{lem:3r3}.\ref{c:3r35}.} By Claim~\ref{lem:3r3}.\ref{c:3r34} $M_j$'s
load immediately before the assignment of $J_l$ is at most $1.25 L_l$. Removing three jobs of
processing time at least $1/6L$ each as well as $J_l$ reduces $M_j$'s load to a value of at most 
$0.75 L$.

{\em Proof of Claim~\ref{lem:3r3}.\ref{c:3r36}.} By Claim~\ref{lem:3r3}.\ref{c:3r31} $J_l$ is large 
at time $l$ and hence $p_l> 0.5 L_l$. Since $p_l<1/6L$, we have $L_l<1/3L$.
Using Claim~\ref{lem:3r3}.\ref{c:3r34} we obtain that $M_j$'s load immediately after the assignment 
of $J_l$ is at most $1.25 L_l + p_l \leq 5/12 L + 1/6L < 0.75L$.

{\em Proof of Claim~\ref{lem:3r3}.\ref{c:3r37}.} Any job $J_i\in T'_2$ is large at time $t^*-1$ and hence 
$p_i > 0.5 L_{t^*-1}$. Since $p_i < 1/6 L$ it follows $L_{t^*-1} < 1/3 L$. By Claim~\ref{lem:3r3}.\ref{c:3r32},
we have $\sum_{J_i\in T_1} p_i \leq 0.75 L_{t^*-1} + p_l\leq 0.25L + 1/6 L$, where $J_l$ is the last job of $T_1$
assigned to $M_j$. Thus $\sum_{J_i\in T_1}p_i + \sum_{J_i\in T'_2}p_i\leq 0.25L + 3\cdot 1/6L\leq 0.75L$.

We finish the proof of the lemma using a case distinction on the size of $T_3$.

\begin{itemize}
\item $|T_3|=0$: Then by Claim~\ref{lem:3r3}.\ref{c:3r32} it suffices to remove $T_2$ and the last job
of $T_1$ assigned to $M_j$. By Claim~\ref{lem:3r3}.\ref{c:3r33}, $T_2$ contains no more than three jobs.

\item $|T_3|= 1$: We may assume that the only job $J_l\in T_3$ has a processing time of at least
$1/6L$ since otherwise by Claim~\ref{lem:3r3}.\ref{c:3r36} no job has to be removed. Moreover, we may
assume that $|T_2|=3$ since otherwise, by Claim~\ref{lem:3r3}.\ref{c:3r32} it suffices to remove
$T_2\cup T_3$ and the last job of $T_1$ assigned to $M_j$. If all the jobs of $T_2$ have a processing
time of at least $1/6L$, then Claim~\ref{lem:3r3}.\ref{c:3r35} ensures that it suffices to remove
$T_2\cup T_3$. If one job in $T_2$ has a processing time of at most $1/6L$, then
Claim~\ref{lem:3r3}.\ref{c:3r37} ensures that it suffices to remove the other two jobs of $T_2$
and $T_3$.

\item $|T_3|= 2$: We assume that both jobs in $T_3$ have a processing time of at least $1/6 L$ 
since otherwise, by Claim~\ref{lem:3r3}.\ref{c:3r36}, we can just remove one job of $T_3$ and
$T_2$. If $|T_2|=1$, then by Claim~\ref{lem:3r3}.\ref{c:3r32} it suffices to remove $T_2\cup T_3$
and the last job of $T_1$ assigned to $M_j$. It remains to consider the case $|T_2|\geq 2$. If
none of the jobs in $T_2$ has a processing time smaller than $1/6L$, then Claim~\ref{lem:3r3}.\ref{c:3r35}
applies. If one of the jobs has a processing time smaller than $1/6L$, then Claim~\ref{lem:3r3}.\ref{c:3r37}
applies and it suffices to remove the at most two other jobs of $T_2$ and the jobs of $T_3$.

\item $|T_3|= 3$: Again we assume that all jobs in $T_3$ have a processing time of at least $1/6 L$ 
since otherwise the desired statement follows from Claim~\ref{lem:3r3}.\ref{c:3r36}, Moreover,
we assume $|T_2|>0$; otherwise we can apply again Claim~\ref{lem:3r3}.\ref{c:3r32}. If there
is one job in $T_2$ having a processing time of at least $1/6L$, the desired number of job removals
follows from Claim~\ref{lem:3r3}.\ref{c:3r35}. If this is not the case, then Claim~\ref{lem:3r3}.\ref{c:3r37}
ensures that it suffices to remove the last job of $T_2$ assigned to $M_j$ as well as $T_3$.

\item $|T_3|\geq 4$: If four jobs in $T_3$ have a processing time of at least $1/6 L$, then by
Claim~\ref{lem:3r3}.\ref{c:3r35} it is sufficient to remove three out of these in addition to the last job assigned to $M_j$. If at most three jobs have 
a processing time of at least $1/6 L$, then let $J_l\in T_3$ be last jobs assigned to $M_j$ having
a processing time smaller than  $1/6 L$. By Claim~\ref{lem:3r3}.\ref{c:3r36} it suffices to remove
the jobs of $T_3$ subsequently assigned to $M_j$, and there exist at most three of these.
\end{itemize}
This concludes the proof.
\end{proof}

Recall that {\em ALG$(1.75)$} migrates $\lceil m/2\rceil$ jobs from machines in $B$. Hence, using the 
above Lemma~\ref{lem:3r3}, we obtain that the total number of migrations is at most 
$4\lfloor m/2\rfloor + \lceil m/2\rceil\leq 2.5m$. 
This finishes the proof of Theorem~\ref{th:4}.

\section*{Appendix}

\begin{proof}[Proof of Lemma~\ref{lem:l1}]
Fix $m\geq 2$. We first evaluate $f_m(2)$ and $f_m(1+1/(3m))$. For $\alpha=2$, we have 
$\lceil(1-1/\alpha)m\rceil \geq m/2$. Hence $\lceil(1-1/\alpha)m\rceil \alpha/m \geq 1$ and $f_m(2) \geq 1$.
For $\alpha=1+ 1/(3m)$, there holds $\lceil(1-1/\alpha)m\rceil =1$. Thus $f_m(1+1/(3m)) = 1/(3m) H_{m-1}
+1/m + 1/(3m^2)< 1/3 + 1/2 + 1/12 < 1$. It remains to show that $f_m(\alpha)$ is continuous and strictly
increasing. To this end we show that, for any $\alpha>1$ and small $\epsilon>0$, 
$f_m(\alpha+\epsilon) - f_m(\alpha)$ is strictly positive and converges to~0 as $\epsilon\rightarrow 0$.

First consider an $\alpha >1$ such that $(1-1/\alpha)m\notin \mathbb{N}$. In this case we choose 
$\epsilon > 0$ such that $\lceil(1-1/(\alpha+\epsilon))m\rceil= \lceil(1-1/\alpha)m\rceil$. We have
\begin{eqnarray*}
f_m(\alpha) &=& (\alpha-1)(H_{m-1}-H_{\lceil(1-1/\alpha)m\rceil-1}) + \lceil(1-1/\alpha)m\rceil \alpha/m\\
f_m(\alpha+\epsilon) &=& (\alpha+\epsilon-1)(H_{m-1}-H_{\lceil(1-1/\alpha)m\rceil-1}) + \lceil(1-1/\alpha)m\rceil (\alpha+\epsilon)/m.
\end{eqnarray*}
Thus $f_m(\alpha+\epsilon) - f_m(\alpha) = \epsilon (H_{m-1}-H_{\lceil(1-1/\alpha)m\rceil-1}) + 
\lceil(1-1/\alpha)m\rceil \epsilon/m$. Since $\alpha >1$ there holds 
$\lceil(1-1/\alpha)m\rceil \geq 1$ and thus $f_m(\alpha+\epsilon) - f_m(\alpha) >0$.
Moreover, $f_m(\alpha+\epsilon) - f_m(\alpha)$ tends to~0 as $\epsilon\rightarrow 0$.

Next let $\alpha >1$ such that $(1-1/\alpha)m\in \mathbb{N}$. In this case we choose 
$\epsilon > 0$ such that $\lceil(1-1/(\alpha+\epsilon))m\rceil= \lceil(1-1/\alpha)m\rceil +1$.
There holds
\begin{eqnarray*}
f_m(\alpha) &=& (\alpha-1)(H_{m-1}-H_{\lceil(1-1/\alpha)m\rceil-1}) + \lceil(1-1/\alpha)m\rceil \alpha/m\\
f_m(\alpha+\epsilon) &=& (\alpha+\epsilon-1)(H_{m-1}-H_{\lceil(1-1/\alpha)m\rceil}) + (\lceil(1-1/\alpha)m\rceil+1) (\alpha+\epsilon)/m.
\end{eqnarray*}
Taking into account that $(1-1/\alpha)m\in \mathbb{N}$ we obtain
\begin{eqnarray*}
f_m(\alpha+\epsilon) - f_m(\alpha) &=& -(\alpha-1)\cdot 1/((1-1/\alpha)m) + 
\epsilon (H_{m-1}-H_{\lceil(1-1/\alpha)m\rceil})\\
& &  + (\lceil(1-1/\alpha)m\rceil+1) \epsilon/m + \alpha/m\\
&= &\epsilon (H_{m-1}-H_{\lceil(1-1/\alpha)m\rceil}) + (\lceil(1-1/\alpha)m\rceil+1) \epsilon/m.
\end{eqnarray*}
Again, $f_m(\alpha+\epsilon) - f_m(\alpha)$ is strictly positive and tends to~0 as $\epsilon\rightarrow 0$. 
\end{proof}

\begin{proof}[Proof of Lemma~\ref{lem:l2}]
We first prove that $(\alpha_m)_{m\geq 2}$ is non-decreasing. A first observation is that
$\alpha_m \leq m$ because $f_m(m)\geq 1$. We will show that, for any $m\geq 3$ and
$1<\alpha\leq m$, there holds $f_{m-1}(\alpha) \geq f_m(\alpha)$. This implies 
$1= f_{m-1}(\alpha_{m-1}) \geq f_m(\alpha_{m-1})$. By Lemma~\ref{lem:l1}, $f_m$ is strictly increasing
and thus $\alpha_m\geq \alpha_{m-1}$. Consider a fixed $\alpha$ with $1<\alpha\leq m$. We study two cases 
depending on whether or not $\lceil(1-1/\alpha)(m-1)\rceil = \lceil(1-1/\alpha)m\rceil$.

If $\lceil(1-1/\alpha)(m-1)\rceil = \lceil(1-1/\alpha)m\rceil$, then 
\begin{eqnarray*}
f_m(\alpha) &=& (\alpha-1)(H_{m-1}-H_{\lceil(1-1/\alpha)m\rceil-1}) + \lceil(1-1/\alpha)m\rceil \alpha/m\\
f_{m-1}(\alpha) &=& (\alpha-1)(H_{m-2}-H_{\lceil(1-1/\alpha)m\rceil-1}) + \lceil(1-1/\alpha)m\rceil \alpha/(m-1).
\end{eqnarray*}
We obtain $f_{m-1}(\alpha) - f_m(\alpha) = -(\alpha-1)/(m-1) +  \lceil(1-1/\alpha)m\rceil \alpha/(m(m-1))
\geq -(\alpha-1)/(m-1) + (\alpha-1)/(m-1) =0$ and thus $f_{m-1}(\alpha) \geq f_m(\alpha)$.

If $\lceil(1-1/\alpha)(m-1)\rceil < \lceil(1-1/\alpha)m\rceil$, then 
$\lceil(1-1/\alpha)(m-1)\rceil = \lceil(1-1/\alpha)m\rceil-1$ and
\begin{eqnarray*}
f_m(\alpha) &=& (\alpha-1)(H_{m-1}-H_{\lceil(1-1/\alpha)m\rceil-1}) + \lceil(1-1/\alpha)m\rceil \alpha/m\\
f_{m-1}(\alpha) &=& (\alpha-1)(H_{m-2}-H_{\lceil(1-1/\alpha)m\rceil-2}) + (\lceil(1-1/\alpha)m\rceil-1) \alpha/(m-1).
\end{eqnarray*}
Since $\alpha >1$ there holds $\lceil(1-1/\alpha)(m-1)\rceil \geq 1$. Hence in our case 
$\lceil(1-1/\alpha)m\rceil \geq 2$ and $\lceil(1-1/\alpha)m\rceil -1 >0$. We obtain
$$\textstyle{f_{m-1}(\alpha) - f_m(\alpha) = -{\alpha-1\over m-1} + {\alpha-1\over \lceil(1-1/\alpha)m\rceil -1}
+ \lceil(1-1/\alpha)m\rceil {\alpha\over m(m-1)} - {\alpha\over m-1}.}$$
Choose $x$, with $0\leq x <1$, such that  $\lceil(1-1/\alpha)m\rceil = (1-1/\alpha)m +x$. Then 
\begin{eqnarray*}
\textstyle{f_{m-1}(\alpha) - f_m(\alpha)} &=& \textstyle{ -{\alpha-1\over m-1} + {\alpha-1\over (1-1/\alpha)m +x-1}
+ (1-1/\alpha)m {\alpha\over m(m-1)} + {\alpha x\over m(m-1)}- {\alpha\over m-1}}\\
&=& \textstyle{{\alpha-1\over (1-1/\alpha)m +x-1} + {\alpha x\over m(m-1)}- {\alpha\over m-1}}
\end{eqnarray*}
In order to establish $f_{m-1}(\alpha) - f_m(\alpha)\geq 0$ is suffices to show 
$$\textstyle{{\alpha-1\over (1-1/\alpha)m +x-1} \geq {\alpha(m-x)\over m(m-1)}.}$$ 
This is equivalent to $(\alpha-1)m(m-1) \geq (m-x)((\alpha-1)m +\alpha x-\alpha)$. Standard algebraic
manipulation yield that this is equivalent to $m \geq mx - \alpha x^2+\alpha x$. Let 
$g(x) = mx - \alpha x^2+\alpha x$, for any real number $x$. This function is increasing for any
$x < (m+\alpha)/(2\alpha)$. Since $\alpha \leq m$, the function is increasing for any $x <1$. As
$g(0) = 0$ and $g(1) = m$, it follows that $m \geq mx - \alpha x^2+\alpha x$ holds for all
$0\leq x <1$. We conclude  $f_{m-1}(\alpha) - f_m(\alpha)\geq 0$.

It is easy to verify that $f_2(4/3)=1$. We show that $\lim_{m\rightarrow \infty} \alpha_m$ is upper 
bounded by $W_{-1}(-1/e^2)/(1+ W_{-1}(-1/e^2))$. Ces\'aro~\cite{C} proved
\begin{equation}
0 < H_m - \frac{1}{2} \ln \left (m(m+1) \right) -\gamma < \frac{1}{6m(m+1)}, \label{lbCesaro}
\end{equation}
where $\gamma \approx 0.577$ is the Euler-Mascheroni constant.
Using this inequality we find, for any $c$ with $0< c\leq 1$ and $\lceil cm\rceil -2 >0$,
\begin{eqnarray*}
H_{m-1} - H_{\lceil cm \rceil -2 } & >&  \frac{1}{2} \ln ((m-1)m) + \gamma -  \frac{1}{2} \ln ((\lceil cm \rceil-2)(\lceil cm\rceil-1)) -\gamma - \frac{1}{6(\lceil cm \rceil -2)(\lceil cm \rceil-1)}\\
& \geq &\frac{1}{2} \left ( \ln (m-1)+ \ln m  -   \ln (cm -1) - \ln( cm)      \right)   - \frac{1}{2(\lceil cm \rceil-1)} \\
& = &\frac{1}{2} \left ( \ln (m-1)+ \ln m -   \ln (c(m-1/c)) - \ln(cm)      \right)  - \frac{1}{2(\lceil cm \rceil-1)} \\
& = &\frac{1}{2} \left ( \ln (m-1) - \ln(m-1/c) -2 \ln (c)      \right) - \frac{1}{2(\lceil cm \rceil-1)} \\
& \geq & \frac{1}{2}  \left(   2 \ln (1/c) \right)   - \frac{1}{2(\lceil cm \rceil-1)} \\
&  \geq & \ln (1/c)    - \frac{1}{2(cm-1)}, 
\end{eqnarray*}
where the second to last inequality holds since $\ln (m-1/c) \leq \ln(m-1)$. for $0<c\leq1$ and sufficiently
large $m$.
We obtain
\begin{eqnarray*}
f_m(\alpha)&=&(\alpha-1)(H_{m-1}-H_{\lceil (1-1/\alpha)m \rceil-1}) + \left( \lceil (1-1/\alpha)m \rceil  \right) \frac{\alpha}{m} \\ 
& > &(\alpha-1)\left( \ln  (\frac{\alpha}{\alpha-1})    - \frac{1}{2((1-1/\alpha)m -1)} - \frac{1}{\lceil (1-1/\alpha)m \rceil-1} \right) + \left( \lceil (1-1/\alpha)m \rceil  \right) \frac{\alpha}{m} \\
& \geq & (\alpha-1)\left( \ln  (\frac{\alpha}{\alpha-1})     - \frac{1}{(1-1/\alpha)m -1} \right ) + \alpha-1 =: F(m).
\end{eqnarray*}
Obviously, $\lim_{m\to \infty} F(m) = (\alpha-1) \ln  (\frac{\alpha}{\alpha-1})  +\alpha-1$. We show that 
$(\alpha-1) \ln  (\frac{\alpha}{\alpha-1})  +\alpha-1 = 1$, for $\alpha=\frac{1}{1-\delta}$, where $\delta = -1/W_{-1}(-1/e^2)$.

Equation $(\alpha-1) \ln  (\frac{\alpha}{\alpha-1})  + \alpha-1 = 1$ is equivalent to 
$\ln  (\frac{\alpha}{\alpha-1})+1   = \frac{1}{\alpha-1}$, which in turn is equivalent to 
$$\frac{\alpha}{\alpha-1} \cdot e = e^\frac{1}{\alpha-1}.$$
Substituting $x=1/(\alpha-1)$, which is equivalent to $\alpha=1/x+1$, we find that the above is equivalent to 
$xe+e = e^x$. Applying the Lambert $W$ function we find that  $x=-W_{-1}(-1/e^2)-1$ is a solution of the former equality. Substituting we conclude that in fact $\alpha= W_{-1}(-1/e^2)/(1+ W_{-1}(-1/e^2))$ satisfies the equality. 
Using the same techniques we can show that $\lim_{m\rightarrow \infty} \alpha_m$ is lower bounded 
by $W_{-1}(-1/e^2)/(1+ W_{-1}(-1/e^2))$. In the calculations, (\ref{lbCesaro}) yields that
$H_{m-1} - H_{\lceil cm \rceil} < \ln(1/c) + 1/(2m)$.
\end{proof}

\end{document}